\documentclass[11pt, letterpaper]{article}
\usepackage{graphicx} %
\usepackage{amsmath, amsthm, mathtools, amssymb}
\usepackage{xcolor}
\usepackage{hyperref}
\usepackage{fullpage}
\usepackage[
  style=alphabetic,
  maxbibnames=99,
  maxalphanames=4, 
  minalphanames=3,
  sortcites=false,   %
]{biblatex}
\usepackage{authblk}

\newtheorem{theorem}{Theorem}[section]

\newtheorem{proposition}[theorem]{Proposition}
\newtheorem{definition}[theorem]{Definition}
\newtheorem{claim}[theorem]{Claim}

\newtheorem{corollary}[theorem]{Corollary}

\def\cA{{\cal A}}
\def\cB{{\cal B}}

\def\cD{{\cal D}}

\def\cF{{\cal F}}

\def\cH{{\cal H}}

\def\cU{{\cal U}}

\def\cX{{\cal X}}
\def\cY{{\cal Y}}

\def\bbC{{\mathbb C}}
\def\bbE{{\mathbb E}}

\def\bbN{{\mathbb N}}

\newcommand{\E}{\bbE}

\newcommand{\bv}[1]{\boldsymbol{#1}}

\newcommand{\ceil}[1]{\left\lceil #1 \right\rceil}
\newcommand{\floor}[1]{\left\lfloor #1 \right\rfloor}

\newcommand{\hbd}{\mathsf{H}}

\def\binset{\{0,1\}}

\newcommand{\abs}[1]{\left\vert {#1} \right\vert}
\newcommand{\norm}[1]{\left\| {#1} \right\|}
\newcommand{\set}[1]{\left\{ {#1} \right\}}

\def\poly{{\rm poly}}

\def\negl{{\rm negl}}

\newcommand{\tr}{\mathop{\mathrm{Tr}}}
\newcommand{\td}[2]{\mathop{\mathrm{TD}} \left({#1}, {#2} \right)}
\newcommand{\trnorm}[1]{{\norm{{#1}}_1}}
\newcommand{\trnormdis}[2]{\norm{{#1} - {#2}}_1}

\newcommand{\cind}{{\ \overset{c}{\approx} \ }}
\newcommand{\sind}{{\ \overset{s}{\approx} \ }}

\newcommand{\tsimi}[2]{\textrm{Sim}_{#1}^t \left({#2} \right)}
\newcommand{\tsim}[1]{\textrm{Sim}^t \left({#1} \right)}

\newcommand{\secp}{\lambda}
\newcommand{\keylen}{k}
\newcommand{\outqubits}{m}
\newcommand{\garbqubits}{\mathsf{a}}
\newcommand{\designseedlen}{\kappa}
\newcommand{\s}{s}
\newcommand{\epsExt}{\varepsilon}
\newcommand{\nExt}{n'}
\newcommand{\kExt}{k'}

\newcommand{\tParam}[1]{\widetilde{#1}}
\newcommand{\ampParam}[1]{\widehat{#1}}

\newcommand{\dist}{\mathrm{dis}(n,t)}
\newcommand{\distproj}{{\Pi_{\mathrm{dis}}^{n, t}}}
\newcommand{\distprojreg}[1]{\Pi_{\mathrm{dis}, {#1}}^{n, t}}

\newcommand{\collprojreg}[1]{\Pi_{\mathrm{coll}, {#1}}^{n, t}}

\newcommand{\symparam}[2]{\mathrm{Sym}({#1},{#2})}
\newcommand{\symprojparam}[2]{\Pi_{\mathrm{Sym}}^{{#1},{#2}}}
\newcommand{\symrhoparam}[2]{\rho_{\mathrm{Sym}}^{{#1},{#2}}}

\newcommand{\symproj}{\symprojparam{\tParam{\outqubits}}{t}}
\newcommand{\symrho}{\symrhoparam{\tParam{\outqubits}}{t}}

\newcommand{\ket}[1]{|{#1}\rangle}
\newcommand{\bra}[1]{\langle{#1}|}
\newcommand{\braket}[2]{\langle{#1}|{#2}\rangle}
\newcommand{\ketbra}[2]{\ket{{#1}}\bra{{#2}}}

\newcommand{\vi}{\bv{i}}
\newcommand{\ki}{\ket{i}}
\newcommand{\kvi}{\ket{\vi}}
\newcommand{\kgi}{\ket{g_i}}
\newcommand{\kpx}[1]{\ket{\varphi_{{#1}}}}
\newcommand{\kgx}[1]{\ket{g_{{#1}}}}
\newcommand{\kpi}{\kpx{i}}
\newcommand{\glen}{\garbqubits}

\title{Generic Number-of-Copies Amplification for Pseudorandom States}
\author{Zvika Brakerski}
\author{Miri Zenilman}
\affil{Weizmann Institute of Science\\
\small\texttt{\{zvika.brakerski, miri.butel\}@weizmann.ac.il}}
\date{}

\begin{document}

\maketitle

\begin{abstract}
	We show that \emph{any} quantum pseudorandom state that is secure against single-copy distinguishers, i.e.\ a $1$-PRS, can be amplified to $t$-copy security, i.e.\ to a $t$-PRS, without additional assumptions, for any polynomial $t$ in the security parameter. Prior work (Ananth and Goldin, arXiv 2025) was only able to show this for a restricted class of $1$-PRS constructions, namely ones whose generators only use a small number of ancilla qubits.
	
	Technically, we show that by carefully accounting for the randomness that is used in the construction, and using quantum extractors, it is possible to eliminate an ancilla register of any length and obtain a meaningful $t$-PRS outcome.
\end{abstract}

\tableofcontents

\newpage

\section{Introduction}

Research in recent years uncovered that the complexity-theoretic landscape of quantum cryptography is quite rich and in many cases different from its classical counterpart. In particular, quantum-cryptographic primitives such as quantum pseudorandom states (PRS) \cite{PRandomQuantStates} could be used to achieve cryptographic functionality \cite{QuantCommitWithoutOWF, CryptoFromPRS} but are plausibly not implied by one-way functions, the most fundamental classical cryptographic primitive \cite{QuantPRandAndClassicComplexity, QuantCryptoInAlgorithmica}. The study of ``Microcrypt'', the cryptographic landscape that is not implied by one-way functions, has emerged as a central object of inquiry in the theory of quantum cryptography, see e.g.\ \cite{QuantCommitWithoutOWF,PRandomFunctionLikeStates, CompHardnessForQuantCrypto, PubKeyEncWithQKeys, 1WaynessQuantCrypto, CommitFromQuant1Way, Batra024,PrimForQCryptographyClassicaComm,MetaQuantCrypto}. In particular, it has become an important task to map out the different primitives and their interconnections.

This work focuses on the existential relation between various notions of PRS, which are an important part of Microcrypt. A PRS is a quantum state that can be efficiently generated (starting from a classical random seed), and is computationally indistinguishable from a truly random quantum state, sampled from the Haar-random distribution.\footnote{If we think of quantum states as unit vectors in a complex Hilbert space, then the Haar random distribution can be thought of as sampling a random vector over the unit sphere.} We note that we are only considering pure-state PRS in this work. The original \cite{PRandomQuantStates} definition required that the PRS is indistinguishable from random even when given an a-priori unbounded number of \emph{copies} of the state.\footnote{We recall that, in contrast to the classical setting, the more copies of the same quantum states that are given, the more information about that state can be extracted.} This in particular means that if the output state contains $\outqubits$ qubits, then any construction with $\poly(\outqubits)$-bit seed would require computational hardness (i.e.\ information theoretic security would be impossible). This would not be the case if we only required $t$-copy security for some parameter $t$. Namely, if we required that $t$ copies of the PRS are indistinguishable from $t$ copies of a Haar-random state. In particular, it is possible to achieve this notion information theoretically using so-called ``state-designs'' with a seed length of roughly $tm$ bits.

Despite the above, it turns out that studying these so-called $t$-PRS may prove quite instructive. In particular, there is much literature on the notion of $1$-PRS where only one copy of the state is given to the distinguisher. This object is convenient to work with, since when given just a single copy of the state, the Haar random distribution is identical to the classical uniform distribution. Naturally, it is cryptographically non-trivial to study this object in a setting where the seed length $\keylen$ is shorter than the output length $\outqubits$. Indeed, this notion of (cryptographic) $1$-PRS has been the focus of significant research \cite{QuantCommitWithoutOWF, CommitQuantStates, MetaQuantCrypto, SeperatQuantCommitAndQuant1Wayness}. In particular, it is known that $1$-PRS is at the ``very bottom'' of Microcrypt complexity, in the following sense. First, it is known to imply cryptographic functionality (such as commitment schemes). Second, it is not information theoretically possible, so an unbounded (quantum) attacker can distinguish any $1$-PRS from random. And third, it seems to reside outside the classical complexity landscape, in the sense that it is not known to be violated by a bounded quantum attacker with access to a completely computationally unbounded classical attacker \cite{1QueryUnitarySynth}. This is in contrast to \cite{PRandomQuantStates}-style PRS which are known to be violated given a $P^{\# P}$ oracle, and are known to imply more elaborate cryptographic tasks \cite{PRandomQuantStates, PRandomFunctionLikeStates}. Indeed, there are explicit separation results between $1$-PRS and PRS \cite{SeparatingQuantPRandomness}.

In light of the above, it seems instructive to study the middle ground. The setting of $t$-PRS where $t$ is some asymptotically increasing polynomial function, but is a-priori determined and is not up to the adversary. These objects received fairly little attention in the literature thus far and not much is known about their place within Microcrypt. It was shown in \cite{CommitQuantStates}[Theorem C.2] that a $t$-PRS implies a $1$-PRS with a longer output size, essentially corresponding to the entropy that can be extracted from $t$-copies of a Haar random state. Indeed, comparing to the entropy of a $t$-copy Haar-random state is the benchmark for comparing the key size.\footnote{One could also consider comparing against the randomness-complexity of a state $t$-design, but the numbers are very similar.} Since this entropy is roughly $t \outqubits$, it will be convenient to compare $t \outqubits$ to the key size $\keylen$, and we refer to the value $t \outqubits - \keylen$ as the ``stretch'' of the $t$-PRS.
One can also infer from \cite{CommitFromQuant1Way} that if $t \ge c \cdot \keylen$ for some global constant $c$, then a $t$-PRS implies a notion called one-way puzzles, which is separated from a $1$-PRS. Namely, one should not expect to amplify $1$-PRS into that domain. Note that in all of the above, it is not even clear if $1$-PRS implies a $2$-PRS and certainly not whether it is possible to support a number of copies that grows asymptotically with the security parameter. 

Very recently, Ananth and Goldin \cite{LessIsMore} (henceforth, AG) addressed this question, and showed that if the $1$-PRS generator has a very special form, then it is possible to use it to achieve $t$-PRS for any $t$ that is polynomial in the security parameter. Their result produces a non-trivial (i.e.\ entropy-expanding) $t$-PRS only if the circuit $G$ that generates the $1$-PRS does not use too many ancilla qubits (as a function of the other parameters of the $1$-PRS). Therefore, this result still does not rule out even a separation between $1$-PRS and $2$-PRS in the general setting. In this work, we show that it is possible to amplify $1$-PRS to $t$-PRS generically, for any polynomial $t=t(\secp)$ in the security parameter. Thus we show for the first time that increasing the number of copies of a PRS does not yield a stronger object.

\subsection{Our Results}

We show that \emph{any} $1$-PRS can be amplified into a $t$-PRS for \emph{any} polynomial $t=t(\secp)$ (where $\secp$ is the security parameter). Slightly more formally, we show the following result.

\begin{theorem}[Informal]
Let $G$ be an efficient $1$-PRS generator with seed length $\keylen$ and output length $\outqubits$, and let $t=\poly(\secp)$. Then there exists an efficient $\tParam{G}$ which is a $t$-PRS with seed length $\tParam{\keylen} = t (\keylen+k'+O(\secp))$ and output length $\tParam{\outqubits}= \outqubits+k'$, where $k' \leq \poly(\secp,\abs{G})$ (where $\abs{G}$ is the size of the purified circuit that generates the $1$-PRS).
\end{theorem}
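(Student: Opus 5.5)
The plan is to build $\tParam{G}$ as a uniform superposition of many independently seeded branches, and then argue in two stages. First, computationally replace each branch's $1$-PRS output by a uniform string, in a way that survives the entangled ancilla. Second, statistically compare the resulting ``random subset/phase state'' with the symmetric-subspace state $\symrho$, which is exactly what $t$ copies of a Haar-random state on $\tParam{\outqubits}$ qubits look like. The seed length $t(\keylen+k'+O(\secp))$ is split into $t$ blocks. Block $i$ consists of a $1$-PRS seed $s_i\in\binset^{\keylen}$, an extra string $r_i\in\binset^{k'}$ (used as an extractor seed / padding), and $O(\secp)$ bits specifying a pseudo-random phase or label. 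Let $G$ be purified, so that $G\ket{s}\ket{0}=\sum_x \alpha_{s,x}\ket{x}\ket{g_{s,x}}$ with garbage on $\garbqubits$ qubits. The generator coherently prepares, over an index $i$, the $i$-th branch $G\ket{s_i}$ and applies a quantum extractor, with seed $r_i$, that maps the garbage register to $k'$ nearly uniform output qubits. It then outputs the $\tParam{\outqubits}=\outqubits+k'$ register and uncomputes the index and the remaining garbage, using the fact that all $s_i,r_i$ are classically known to $\tParam{G}$.

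The key steps are as follows.
\begin{enumerate}
\item[(1)] Rewrite the output in the form $\sum_i \beta_i\ket{\varphi_i}$, where each $\ket{\varphi_i}$ is the ``cleaned'' branch. Show that the garbage is disentangled up to negligible error. This is the role of the extractor: conditioned on the output string $x$, the garbage $\ket{g_{s,x}}$ has enough min-entropy relative to the (computationally) $\outqubits$-bit-uniform $x$. Hence the extractor output together with $x$ is close to a uniform $(\outqubits+k')$-bit string, and the leftover garbage factors out. Choosing $k'\le\poly(\secp,\abs{G})$ is what allows the extractor to handle an ancilla of arbitrary length. This is precisely where we improve on AG, who needed $\garbqubits$ small.
\item[(2)] A hybrid argument over the $t$ blocks. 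Since the $t$ copies of the final state touch each block $i$ only through a single classical-description preparation, a distinguisher for $t$ copies yields a single-copy distinguisher for $G$. The reduction plants the challenge state (with its garbage simulated by the extractor step) into block $i$, and samples the other blocks itself, as real PRS outputs or as uniform strings. After $t$ hybrids, every branch is a uniformly random $(\outqubits+k')$-bit string $y_i$ with a random phase.
\item[(3)] A purely information-theoretic step: $t$ copies of $\frac{1}{\sqrt t}\sum_i \omega_i\ket{y_i}$, averaged over independent uniform $y_i$ and phases, are close in trace distance to $\symrho$. Restrict to the distinct subspace $\distproj$: with probability $1-O(t^2 2^{-\tParam{\outqubits}})$ the $y_i$ are distinct. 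Then average over the phases to kill all cross terms except permutation-matched ones, leaving a mixture that agrees with $\symprojparam{\tParam{\outqubits}}{t}$ restricted to distinct types. The collision part $\collprojreg{}$ has negligible weight under both states.
\end{enumerate}

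I expect the main obstacle to be step (1), combined with making step (2) legitimate. The garbage is entangled with the PRS output and may depend on the secret seed, so a naive reduction cannot simulate it, and the branches' garbage registers must also not leave the global superposition decohered. I would first try to use a quantum-proof extractor applied to the joint output-and-garbage state. I would argue min-entropy relative to the $t$ copies via a counting bound, namely that the branch state has at most $\keylen+\garbqubits$ qubits of support. The required bound is that the extracted register is $2^{-\secp}$-close to uniform conditioned on everything the adversary holds. If that bound fails directly, the fallback is to treat $(x,g)$ jointly as a classical-quantum source, and to extract only after measuring the output register in the reduction, which is allowed because a single copy of a $1$-PRS is indistinguishable from a classical uniform string.
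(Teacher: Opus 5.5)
\textbf{There is a genuine gap: the construction itself is insecure, and steps (1)--(3) do not go through.}

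\textbf{Only $t$ branches.} Your superposition runs over one branch per seed block, with amplitude $1/\sqrt{t}$. In $t$ copies of such a state each copy selects a branch independently. The weight on tuples where the copies select pairwise distinct branches is therefore only $t!/t^t$. So the state in your step (3) is far from $\symrho$. Measuring every copy in the computational basis gives a repeated outcome with probability $1-t!/t^t\geq 1/2$ (already for $t=2$), versus $O(t^2 2^{-\tParam{\outqubits}})$ for $t$ copies of a Haar state. You are conflating distinctness of the values $y_1,\dots,y_t$ with distinctness of the branch indices picked across the $t$ copies; only the latter drives the distinct-subspace argument. The real construction fails the same test, restricted to the $\outqubits$ output qubits, e.g.\ when $G$ outputs a classical pseudorandom string, which is a legitimate $1$-PRS. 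The same fact sinks step (2). Most of the weight sits on terms where some branch appears in several copies, so a single challenge copy of $\ket{\varphi_{s_i}}$ cannot be planted without cloning.

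\textbf{The missing idea.} Following AG, superpose over $2^n$ indices with $n=\omega(\log\secp)$, and derive each branch's data from limited-independence functions of the index. Here $f_4$ (the $1$-PRS seed), $f_2,f_3$ (one-time-pad keys) and $f_5$ (extractor seed) are $t$-wise independent, and $f_1$ is a $2t$-wise independent sign. The sign kills all cross terms between colliding and distinct index tuples, so $\hbd_0$ is $O(t^2/2^n)$-close to $\distprojreg{C}\hbd_0\distprojreg{C}$ (Claim~\ref{tprs_close_to_proj}). It also collapses $\hbd_1$ to $2^{-nt}\sum_{\bv{i}\in\dist,\pi\in S_t}(\tau^{\bv{i},\bv{i}}_{RAB}\otimes\ketbra{\bv{i}}{\bv{i}}_C)P_\pi$ (Claim~\ref{hbd_1_explicit}). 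In that form only $t$ distinct inputs appear, each once up to symmetrization. Hence $t$-wise independence suffices, which keeps the seed length at $t(\cdots)$, and the symmetrization simulator can build the state from $t$ independent single copies.

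\textbf{Step (1).} You cannot uncompute the leftover garbage. In general there is no efficient unitary $\ket{s}\ket{g_s}\mapsto\ket{s}\ket{0}$; otherwise one could assume $\garbqubits=0$ and the ancilla issue would be moot. Nor can you uncompute the index, since removing $\ket{i}$ from $\sum_i\beta_i\ket{i}\ket{\phi_i}$ requires computing $i$ from the quantum state $\ket{\phi_i}$. Simply discarding either register makes the output mixed. The paper instead outputs the index register $C$, the extractor-seed register $R$ and the entire garbage register $B$. That is why, in the informal statement, the extra output length is $\designseedlen+\garbqubits+n$.

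\textbf{Where the extractor actually helps.} There is no entropy in the garbage to extract ``conditioned on $x$''. For a pure-state generator $\ket{g_s}$ is pure and in product with the output, so $H_\infty(B|AC)=0$. The paper uses Theorem~\ref{p_extractor_thm} with min-entropy bound $0$. Even then, a $2$-design makes the first $\ell\approx\garbqubits/2-\log(1/\epsExt)$ qubits of $U^r\ket{g_s}$, jointly with $r$, look maximally mixed. A one-time pad on the remaining $q=\garbqubits-\ell$ qubits then depolarizes all of $B$. This costs $2q\approx\garbqubits+2\log(1/\epsExt)$ key bits, which the $\garbqubits$ output qubits pay back, so the ancilla length cancels from the stretch. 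This statistical depolarization of $R$ and $B$ happens before the computational hybrid, so the $1$-PRS reduction never has to simulate garbage.
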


Let us try to explain the parameters of the theorem. Naively, we are ``cranking together'' $t$ instantiations of the $1$-PRS in order to obtain a $t$-PRS. So one would expect $\tParam{\keylen} \approx t \keylen$ and $\tParam{\outqubits} \approx \outqubits$. However, in the actual implementation, we need some additional randomness to make the process go through, but this randomness is not consumed and is retrieved in the output (this is similar to the use of a seed in a strong extractor, which is one of the tools that we use). So we get to $\tParam{\keylen} \approx t(\keylen+k')$ and $\tParam{\outqubits} \approx \outqubits+k'$. The dependence on $\abs{G}$ comes from depolarizing the ancilla qubits used in the execution of $G$. However, we have some ``parasitic costs'' that require an additional $O(\secp)$ per produced copy of the $t$-PRS, which leads to the expressions in the theorem.

Therefore, if we consider the ``stretch'' of our construction we get $t(\outqubits-\keylen-O(\secp))$. We recall that $\outqubits>\keylen$ since the original $1$-PRS is non-trivial. 
A naive application of this theorem in a setting where $\outqubits-\keylen$ is small (e.g.\ $\outqubits-\keylen=1$) might not be suitable for our amplification theorem. However, we notice that all we require is that the \emph{additive} stretch, i.e.\ $\outqubits-\keylen$, is greater than $O(\secp)$. This can be handled easily by first sequentially repeating the $1$-PRS. That is, it is straightforward to go from $1$-PRS with parameters $(\keylen,\outqubits)$ to $1$-PRS with parameters $(d \keylen, d \outqubits)$ for any $d=\poly(\secp)$, by concatenating $d$ instantiations side by side. Taking the appropriate value $d=O(\secp)$, we get a sufficient stretch to apply our amplification and obtain a valid construction.

We note that whereas we can support arbitrary polynomial dependence of $t$ in the security parameter, our construction increases the length of the seed as well. Therefore, it is not possible to amplify $t$ as a function of the (new) seed length $\tParam{\keylen}$. This is unavoidable if we accept the aforementioned separations. Once we have $t \ge c \cdot \keylen$, we get an inherently weaker primitive that we do not expect to construct from $1$-PRS.

\subsection{Technical Overview}

We start by recalling the basic idea of \cite{LessIsMore} (paraphrased for the purposes of this paper). Let $G$ be a purified circuit that generates the $1$-PRS with the following syntax: $G \ki\ket{0} = \kpi\kgi$, where $i \in \binset^\keylen$ is a seed value, $\kpi$ is the produced $\outqubits$-qubit state, and $\kgi$ is a garbage state. The generation circuit can always be presented in this way without loss of generality. Let us assume for a second that there is no garbage state, so it is possible to generate a superposition of the form
\[
\frac{1}{2^{\secp/2}}\sum_{i \in \binset^\secp} (-1)^{f_1(i)} \ki \kpx{f_2(i)}~,
\]
where $f_1$ is a random binary function and $f_2$ is a random function from $\binset^\secp$ to $\binset^\keylen$. Now if we take $t$ copies of this state, we get a state of the form:
\[
\frac{1}{2^{\secp t/2}}\sum_{\vi \in (\binset^\secp)^t} \underbrace{(-1)^{\sum_{j=1}^t f_1(i_j)}}_{\text{denote $(-1)^{f_1(\vi)}$}} \kvi \underbrace{\bigotimes_{j=1}^t \kpx{f_2(i_j)}}_{\text{denote $\kpx{f_2(\vi)}$}} = \frac{1}{2^{\secp t/2}} \sum_{\vi} (-1)^{f_1(\vi)} \kvi \kpx{f_2(\vi)} ~.
\]
An overwhelming fraction of the mass of this tensor product resides on vectors where $\vi$ contains $t$ distinct values. This is known as the ``distinct subspace'' and plays a very important role in many results having to do with quantum pseudorandomness. The random phase removes all correlations between $\vi, \vi'$ unless $\vi'$ is a permutation of $\vi$, we refer to this here as ``symmetric decoupling''. This is by now a standard technique in quantum pseudorandomness which is not new to our work or to AG, so we will not get into the details. An important note is that in AG, a larger-order root of unity was used instead of $(-1)$, which is wasteful and, as we show in our work, not required. We therefore just use $(-1)$ from the start here, to avoid clutter in the notation.

We may therefore assume from now on that $\vi$ only ranges over distinct values, and furthermore, each $\kvi$ only has correlations with its symmetric counterparts. Furthermore, we notice that when restricted to distinct $\vi$, the state $\kpx{f_2(\vi)}$ is just a $t$-tensor of $t$ independent instances of the $1$-PRS. Applying the $1$-PRS property, this is computationally indistinguishable from a $t$-tensor of independent Haar-random states. It holds that a symmetrically decoupled state containing a $t$-tensor of independent Haar-random states is close to a $t$-copy Haar random state.\footnote{This is not a contribution of our paper so we don't get into details, but at the level of ``footnote intuition'' we can explain that being symmetrically invariant means that the $t$-fold state behaves the same as $t$-copies of one state, and since each marginal is random, this is indeed similar to $t$-copies of a Haar random state.}

Therefore, the above simplified version of AG indeed produces a state that is $t$-copy indistinguishable from Haar, but it uses random functions $f_1, f_2$. This is where AG notice that it is possible to create the above $t$-tuple by making only $t$ oracle calls to $f_1, f_2$. They can therefore use the well known result by Zhandry \cite{Zah12} and replace $f_1, f_2$ by $2t$-wise independent functions. This means that the seed length of the construction is roughly $\tParam{\keylen} = 2t(\secp + \keylen)$ (assuming for simplicity $\secp \le \keylen$). The output length is $\tParam{\outqubits} = \secp + \outqubits$. Therefore, the new stretch that they get is $t \tParam{\outqubits} - \tParam{\keylen} = t(\outqubits - 2 \keylen - \secp)$. Therefore, even for this simple variant, one needs the initial $1$-PRS to be at least length-doubling in order to have a chance of getting non-trivial $t$-PRS. Note that the sequential composition technique discussed above will not help in this case.

In fact, the above is \emph{the most favorable setting} for the AG construction. Recall that we assumed that $\kgi$ is empty. A central technical challenge in AG is how to address the possibility of a non-empty $\kgi$. Their idea is to use quantum one-time pad (QOTP): to use randomness from the seed to completely depolarize $\kgi$, i.e.\ to ``encrypt'' it so that it is effectively removed from the state. QOTP asserts that applying $X^x Z^z$ for random $x,z\in\binset$ to any $1$-qubit state completely depolarizes the state of this qubit. Their final construction, therefore, is of the form
\[
\frac{1}{2^{\secp/2}}\sum_{i \in \binset^\secp} (-1)^{f_1(i)} \ki \kpx{f_4(i)} X^{f_2(i)} Z^{f_3(i)} \kgx{f_4(i)}~,
\]
where all $f_1, f_2, f_3, f_4$ are $2t$-wise independent as before (note that now $f_4$ plays the same role as $f_2$ in the simplified construction). The output length of $f_2, f_3$ is exactly the qubit-length of $\kgi$, which is the number of output ancilla qubits in the circuit $G$. We denote this number by $\glen$.

With this addition, the parameters they achieve are $\tParam{\keylen} = 2t(\secp + \keylen + 2\glen)$ and $\tParam{\outqubits} = \secp + \outqubits + \glen$. Now $t \tParam{\outqubits} - \tParam{\keylen} = t(\outqubits -2 \keylen - \glen - \secp)$, so it is not even enough that $\outqubits > 2 \keylen$, but it also needs to account for the $\glen$ qubits of the ancilla. Therefore their result is only applicable in a fairly narrow regime of parameters.

\paragraph{Our Improvements.} We would like to use the same components as AG, but ensure that we obtain a meaningful result in all parameter regimes. Conceptually, our techniques can be viewed as handling two artifacts separately. 

First, in order to handle the length-doubling constraint, we propose a tighter analysis of the AG approach, showing that the full strength of Zhandry's result is not required here. Indeed, whereas $f_1$ is required to be $2t$-wise independent, the state after symmetric decoupling is, well, symmetric, and therefore it suffices to take $f_2, f_3, f_4$ to only be $t$-wise independent. Therefore, for the simple variant without ancilla, we can obtain $t \tParam{\outqubits} - \tParam{\keylen} = t(\outqubits - \keylen - \secp)$, where we recall that $O(\secp)$ slackness can be handled by sequential composition.

Second, we need to handle the dependence on $\glen$. To this end, we use a similar technique to the one used by Cavalar et al.~\cite{MetaQuantCrypto} in the context of quantum meta-complexity. While their work is fairly technical, we believe that there is an important underlying intuitive insight. It is well established that QOTP requires a key that is twice the length of the state to be encrypted (the ``message state''). However, this is only really required if the message state is fully entangled with an adversarial environment (e.g.\ encrypting half of an EPR pair). Indeed, for a multi-qubit pure state, it suffices to use a secret key of (roughly) the message-length, assuming the existence of a common random string. To explain this, consider a maximally entangled state $(U_1\otimes U_2) \sum_x \ket{x} \ket{x}$, where $U_1, U_2$ are arbitrary. Then it suffices to QOTP encrypt only half of the qubits (which requires $2n$ bits, the same as the message length) in order to fully randomize the state. Viewed differently, it suffices to trace out half of the state in order for the remainder to become uniform, and the tracing out is implemented by a depolarizing QOTP. 

The idea in \cite{MetaQuantCrypto} is to use a \emph{quantum strong extractor}. Intuitively, applying an extractor scrambles a $2n$-qubit pure state so that the entanglement between the first and second half is roughly $n$, which allows to apply the above intuition. Concretely, if we apply a unitary $2$-design (e.g.\ a random Clifford) to a pure quantum state, the marginal distribution of the first $n-\log(1/\varepsilon)$ qubits becomes $\varepsilon$-close to maximally mixed. Therefore it suffices to depolarize the remaining $n+\log(1/\varepsilon)$ qubits, using a key of length $2(n+\log(1/\varepsilon))$ to achieve the required result. This insight can be generalized to non-pure states so long as the min-entropy of the state conditioned on the environment could be lower-bounded (recall that entanglement causes the conditional entropy to be negative). However, for our analysis the pure version suffices. Note that the strong extractor property means that the randomness used to generate the $2$-design can also be a part of the output.

Finally, we take $\varepsilon = 2^{-\secp}$ and obtain the following construction:
\[
\frac{1}{2^{\secp/2}}\sum_{i \in \binset^\secp} (-1)^{f_1(i)} \ki \ket{f_5(i)} \kpx{f_4(i)} X^{f_2(i)} Z^{f_3(i)} U_{f_5(i)} \kgx{f_4(i)}~,
\]
where $U$ is a family of unitary $2$-designs, $f_1$ is a $2t$-wise independent function with output length $1$, $f_2, f_3, f_4, f_5$ are $t$-wise independent with output lengths $\glen/2+\secp, \glen/2+\secp, \keylen, \designseedlen$, where $\designseedlen$ is the seed length of the $2$-design. Note that $X^{f_2(i)} Z^{f_3(i)}$ only act on the first $\glen/2+\secp$ qubits of the state $U_{f_5(i)} \kgx{f_4(i)}$. We therefore achieve $\tParam{\keylen} = 2t\secp + t(\keylen + \glen + 2 \secp + \designseedlen)$, $\tParam{\outqubits} = \secp + \outqubits + \glen + \designseedlen$. This finally implies that for our construction $t \tParam{\outqubits} - \tParam{\keylen} = t(\outqubits - \keylen -3\secp)$, and therefore, together with sequential repetition if needed, we obtain a $t$-PRS for any input $1$-PRS.

\section{Preliminaries}
\label{preliminaries}

\subsection{Quantum Information}
For a system of $m$ qubits residing in register $R$ we use $\abs{R}$ to denote the dimension of the system, that is, $\abs{R} = 2^m$. We now define properties of linear operators acting on quantum systems.

Let $\cH \cong \bbC^{2^m}$ be a Hilbert space over $m$ qubits, and $A$ a linear operator acting on $\cH$. We say that $A$ is a \emph{sub-normalized state} if it is PSD and $\tr[A] \leq 1$. The \emph{trace norm} of $A$ is defined by 
$$\trnorm{A} = \tr \left[ \sqrt{A^{\dagger} A} \right] = \text{sum of singular values of $A$} ~.$$
We say that $A$ is a \emph{quantum state} if it is sub-normalized and $\tr[A] = 1$. If $\rho, \sigma$ are sub-normalized then we write $\rho \preceq \sigma$ when $\sigma - \rho$ is PSD, i.e. $0 \preceq \sigma - \rho$. The \emph{trace distance} of quantum states $\rho, \sigma$ is defined by 
$$\td{\rho}{\sigma} = \frac{1}{2} \trnormdis{\rho}{\sigma} ~.$$

We move to definitions that consider a tensor product of $t$ identical Hilbert spaces.

\begin{definition}[The Distinct Set and The Distinct Subspace]
    Let $n, t \in \bbN$.
    \begin{itemize}
    \item The \emph{distinct set} is defined as
    $$\dist = \set{(i_1, \dots, i_t) \mid \forall j \neq k \quad i_j \neq i_k}
    \subseteq \left(\binset^n \right)^t ~.$$

    \item The \emph{distinct subspace} is the subspace spanned by the distinct set, i.e., $\operatorname{span} \set{\ket{\bv{i}} \mid \bv{i} \in \dist} \subseteq (\bbC^{2^n})^{\otimes t}$, where if $\bv{i} = (i_1, \dots, i_t)$ then $\ket{\bv{i}}$ is a shorthand for $\ket{i_1} \otimes \dots \otimes \ket{i_t}$.

    \item The \emph{projector onto the distinct subspace} is defined as 
    $$\distproj = \sum_{\bv{i} \in \dist} \ketbra{\bv{i}}{\bv{i}} ~.$$
    If a system consists of $t$ copies of registers $CE$ where $C$ is a register of $n$ qubits, then the projector onto the distinct subspace of $C$ is $\distprojreg{C} = I_E \otimes \sum_{\bv{i} \in \dist} \ketbra{\bv{i}}{\bv{i}}_C$.
    \end{itemize}
\end{definition}

\begin{definition}[Symmetric Subspace]
    Let $m,t \in \bbN$ and a Hilbert space over $m$ qubits, $\cH \cong \bbC^{2^m}$. Consider the Hilbert space $\cH' = \cH^{\otimes t}$.
    
    For any permutation $\pi \in S_t$ define the unitary permutation operator 
    \begin{flalign*}
        P_{\pi} 
        &= \sum_{x_1, ..., x_t \in \binset^m} \bigotimes_{j=1}^t \ketbra{x_j}{x_{\pi(j)}} \\
        &= \sum_{x_1, ..., x_t \in \binset^m} \ketbra{x_1, ..., x_t}{x_{\pi(1)}, ..., x_{\pi(t)}}
        ~.
    \end{flalign*}
    Define the \emph{symmetric subspace} to be
    $$\symparam{m}{t} = \set{ \ket{\psi} \in \cH' \mid P_{\pi} \ket{\psi} = \ket{\psi} \quad \forall \pi \in S_t} ~.$$
\end{definition}

We list some important properties of the symmetric subspace (see \cite{ChurchSymmetricSubspace} for a detailed discussion): 
\begin{itemize}
    \item The operator $\symprojparam{m}{t} = \frac{1}{t!} \sum_{\pi \in S_t} P_{\pi}$ is a projector onto the symmetric subspace. We denote the normalized projector by $\symrhoparam{m}{t} = \frac{\symprojparam{m}{t}}{\tr \left[ \symprojparam{m}{t} \right]}$.
    
    \item $\dim(\symparam{m}{t}) = \tr \left[ \symprojparam{m}{t} \right] = \binom{2^m + t - 1}{t}$. 

    \item $\symrhoparam{m}{t} = \E_{\ket{\psi} \gets \mu_m} \left[ \ketbra{\psi}{\psi}^{\otimes t} \right]$ where $\mu_m$ is the Haar random distribution of states over $m$ qubits.

    \item For any system of $m$ qubits with subsystem of $n$ qubits residing in register $C$, the projectors $\symprojparam{m}{t}$ and $\distprojreg{C}$ commute, i.e. $\symprojparam{m}{t} \distprojreg{C} = \distprojreg{C} \symprojparam{m}{t}$. 
\end{itemize}

We prove a simple claim that shows that projecting the normalized projector of the symmetric subspace of the system to the distinct subspace of a subsystem doesn't end up too far.
\begin{claim}
\label{proj_to_dis_subspace_close_to_sym}
    Let $R$ be a system of $r$ qubits and $C$ a system of $n$ qubits. Denote $p = n + r$, and define $\cH = (\cH_R \otimes \cH_C)^{\otimes t}$. Let $\symrhoparam{p}{t}$ be the normalized projector onto the symmetric subspace of $\cH$, and $\distprojreg{C}$ the projector onto the distinct subspace of $\cH_C^{\otimes t}$. If $\frac{2t}{2^n} \leq 1$ then $\trnormdis{\symrhoparam{p}{t}}{\symrhoparam{p}{t} \distprojreg{C}} \leq \frac{t^2}{2^n}$.
\end{claim}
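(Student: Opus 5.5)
Since $\symprojparam{p}{t}$ and $\distprojreg{C}$ commute and both are projectors, their product $\Pi:=\symprojparam{p}{t}\distprojreg{C}$ is itself a projector, and
\[
\symrhoparam{p}{t}-\symrhoparam{p}{t}\distprojreg{C}=\frac{\symprojparam{p}{t}(I-\distprojreg{C})}{\tr[\symprojparam{p}{t}]}
\]
is PSD (a product of commuting projectors, scaled by a positive constant). The trace norm of a PSD operator equals its trace. The plan is therefore to compute
\[
\trnormdis{\symrhoparam{p}{t}}{\symrhoparam{p}{t}\distprojreg{C}} = 1-\tr\left[\symrhoparam{p}{t}\distprojreg{C}\right],
\]
that is, to lower bound the probability that measuring $C^{\otimes t}$ in the computational basis yields distinct strings, when the state is $\symrhoparam{p}{t}$.

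To evaluate this, use the identity $\symrhoparam{p}{t}=\E_{\ket{\psi}\gets\mu_p}\left[\ketbra{\psi}{\psi}^{\otimes t}\right]$ only heuristically. The cleaner route is a direct count in the basis of "types." The symmetric subspace of $(\bbC^{2^p})^{\otimes t}$ has an orthonormal basis indexed by multisets of size $t$ from $\binset^p$: each such basis vector is the normalized symmetrization of $\ket{x_1,\dots,x_t}$. The projector $\distprojreg{C}$ commutes with $P_\pi$ and is diagonal in the computational basis. Hence $\tr[\symprojparam{p}{t}\distprojreg{C}]$ equals the number of multisets $\{x_1,\dots,x_t\}$ whose $C$-parts are pairwise distinct, because each symmetrized vector lies entirely inside or entirely outside the image of $\distprojreg{C}$. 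Those multisets are in fact sets. Choosing $t$ distinct $C$-values in order, each with an arbitrary $R$-value, and dividing by $t!$ gives the count
\[
\binom{2^n}{t}2^{rt}.
\]
The denominator is $\binom{2^p+t-1}{t}$.

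The remaining step is the estimate, which I expect to be the only real obstacle. Write $N=2^p$ and $M=2^n$. The ratio is
\[
\prod_{j=0}^{t-1}\frac{(M-j)2^r}{N+j}=\prod_{j=0}^{t-1}\frac{N-j2^r}{N+j}\ge\prod_{j}\left(1-\frac{j}{M}\right)\left(1-\frac{j}{N}\right),
\]
using $\frac{1}{1+j/N}\ge 1-j/N$. Since $N\ge M$, each factor is at least $(1-j/M)^2$. By the union-type bound $\prod(1-a_j)\ge 1-\sum a_j$ (valid as all $a_j\le 1$, guaranteed by $t\le M$, which follows from $2t/2^n\le1$), the ratio is at least
\[
1-\frac{2}{M}\sum_{j<t}j = 1-\frac{t(t-1)}{M}\ge 1-\frac{t^2}{2^n}.
\]
Hence the trace norm is at most $t^2/2^n$, and the hypothesis $2t\le 2^n$ is needed only to keep the factors nonnegative.
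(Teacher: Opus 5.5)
Your proof is correct and follows essentially the same route as the paper. Both arguments reduce to $1-\tr[\symrhoparam{p}{t}\distprojreg{C}]$ via positivity, obtain the exact ratio $\binom{2^n}{t}2^{rt}/\binom{2^{p}+t-1}{t}$, and finish with the union bound to get $t(t-1)/2^n$. There are two minor differences: you compute the trace via the multiset basis of the symmetric subspace, whereas the paper expands $\symprojparam{p}{t}=\frac{1}{t!}\sum_{\pi}P_\pi$ and notes that only $\pi=\mathrm{id}$ survives on distinct tuples; and you bound each factor below by $(1-j/2^n)^2$, whereas the paper bounds $\epsilon_j=\frac{j(2^r+1)}{2^{p}+j}\le 2j/2^n$.
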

\begin{proof}
    Since $\symrhoparam{p}{t}$ and $\distprojreg C$ commute, $\symrhoparam{p}{t}(I-\distprojreg C)$ is PSD, so
    \begin{flalign*}
        \trnormdis{\symrhoparam{p}{t}}{\symrhoparam{p}{t} \distprojreg{C}} 
        &= \trnorm{\symrhoparam{p}{t} (I - \distprojreg{C})} \\
        &= \tr \left[ \symrhoparam{p}{t} (I - \distprojreg{C}) \right] \\
        &= 1 - \tr \left[ \symrhoparam{p}{t} \distprojreg{C}\right]
    \end{flalign*}
    Observe that 
    $$\symrhoparam{p}{t} \distprojreg{C} = 
    \frac{1}{\binom{2^{n+r} + t - 1}{t}} \cdot \frac{1}{t!} \sum_{\pi \in S_t} \sum_{\bv{i} \in \dist, \bv{x} \in (\binset^r)^t} \bigotimes_{j=1}^t \ketbra{x_j}{x_{\pi(j)}}_R \otimes \ketbra{i_j}{i_{\pi(j)}}_C ~.$$
    For a fixed $t$-tuple $\bv{i} \in \dist$, index $j \in [t]$ and permutation $\pi \in S_t$, 
    $$\tr[\ketbra{i_j}{i_{\pi(j)}}_C] = \braket{i_j}{i_{\pi(j)}} = \begin{cases}
        1 & i_j = i_{\pi(j)} \\
        0 & i_j \neq i_{\pi(j)}
        ~.
    \end{cases}$$
    So the pair $\bv{i}, \pi$ contribute to the trace $\iff$ for all $j \in [t]$, $i_j = i_{\pi(j)}$. Since $\bv{i}$ is a tuple of distinct elements, this is equivalent to $\pi$ being the identity. Therefore,
    \begin{flalign*}
        \tr \left[ \symrhoparam{p}{t} \distprojreg{C} \right]
        = \frac{2^{rt} \abs{\dist}}{\binom{2^{n+r} + t - 1}{t} t!} 
        = \frac{2^{rt} \binom{2^n}{t}}{\binom{2^{n+r} + t - 1}{t}}
        ~.
    \end{flalign*}
    Therefore,
    \begin{flalign*}
        \trnormdis{\symrhoparam{p}{t}}{\symrhoparam{p}{t} \distprojreg{C}} 
        = 1 - \frac{\binom{2^n}{t} \cdot 2^{rt}}{\binom{2^{n+r} + t - 1}{t}}
        = 1 - \prod_{j=0}^{t-1} \left( 1 - \underbrace{\frac{j (2^r + 1)}{2^{n+r} + j}}_{\epsilon_j} \right)
        ~.
    \end{flalign*}
    For each $j$, 
    $$\epsilon_j = \frac{j (2^r + 1)}{2^{n+r} + j} \leq \frac{2 j \cdot 2^r}{2^{n+r}} = \frac{2j}{2^n} \coloneqq \epsilon'_j ~.$$
    By assumption, $\frac{2t}{2^n} \leq 1$, so also $\epsilon_j \leq \epsilon'_j \leq 1$ for all $j$. Hence $\trnormdis{\symrhoparam{p}{t}}{\symrhoparam{p}{t} \distprojreg{C}} \leq 1 - \prod_{j=0}^{t-1} (1 - \epsilon'_j)$ and we can use the union bound $1 - \prod_j (1 - \epsilon'_j) \leq \sum_j \epsilon'_j$ and get
    \begin{flalign*}
        \trnormdis{\symrhoparam{p}{t}}{\symrhoparam{p}{t} \distprojreg{C}}
        \leq \sum_{j=0}^{t-1} \frac{2j}{2^n} 
        = \frac{2 t (t - 1)}{2 \cdot 2^n} 
        \leq \frac{t^2}{2^n}
        ~.
    \end{flalign*}
\end{proof}

\subsection{Pseudorandom States}

A quantum polynomial time (QPT) algorithm is a sequence of quantum circuits $C = \set{C_\secp}_\secp$ which are polynomially bounded, that is, there exists a polynomial $p(\cdot)$ such that the circuits $C_\secp$ are of size at most $p(\secp)$.\footnote{Note that this definition is non-uniform but a uniform version can be defined analogously in a straightforward manner.} A \emph{unitary} quantum algorithm is a sequence of quantum circuits such that each $C_\secp$ is a unitary mapping. If the output of a quantum algorithm is a classical bit, we call it a distinguisher (or distinguishing adversary). 

For such a distinguisher we define acceptance probability on a sequence of sub-normalized states $\rho = \set{\rho_{\secp}}_{\secp}$, denoted $\Pr[C_\secp(\rho_\secp)=1]$, to be $0$ if $\tr [\rho_\secp] = 0$ and otherwise $\Pr \left[ C_\secp \left(\frac{\rho_\secp}{\trnorm{\rho_\secp}} \right) = 1 \right] \cdot \tr[\rho_\secp]$. With this convention, the usual definitions of statistical and computational indistinguishability of quantum state ensembles extend directly to sub-normalized states. We use the notation $\cind$ to denote that states are computationally indistinguishable, and $\sind$ to denote that states are statistically indistinguishable.

\begin{proposition}
Let $\rho = \set{\rho_{\secp}}_{\secp}$  and $\sigma = \set{\sigma_{\secp}}_{\secp}$ be ensembles of sub-normalized states of the same dimensions, and $\varepsilon$ a negligible function. If for all $\secp$, $\trnormdis{\rho_\secp}{\sigma_\secp} \leq \varepsilon(\secp)$ then any quantum distinguisher $\cA = \set{\cA_\secp}_\secp$ can distinguish $\rho$ from $\sigma$ with advantage at most $\varepsilon$, i.e. 
$$\abs{ \Pr[\cA_\secp(\rho_\secp)=1]
- \Pr[\cA_\secp(\sigma_\secp)=1]} \leq \varepsilon(\secp) ~.$$
Moreover, if for all $\secp$, $\rho_\secp$ and $\sigma_\secp$ are normalized and $\td{\rho_\secp}{\sigma_\secp} \leq \varepsilon(\secp)$ then also in this case $\cA$ can distinguish $\rho$ from $\sigma$ with advantage at most $\varepsilon$.
\end{proposition}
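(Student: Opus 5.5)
The plan is to reduce everything to the operational meaning of the trace norm via the Helstrom/Holevo bound for a single two-outcome POVM. For a fixed $\secp$, any distinguisher $\cA_\secp$ (a circuit followed by measuring an output qubit) induces a POVM element $0 \preceq M_\secp \preceq I$. I will first check that, under the paper's convention for sub-normalized states, $\Pr[\cA_\secp(\rho_\secp)=1] = \tr[M_\secp \rho_\secp]$. If $\tr[\rho_\secp]=0$ then $\rho_\secp=0$ (it is PSD), and both sides are $0$. Otherwise, for PSD $\rho_\secp$ we have $\trnorm{\rho_\secp}=\tr[\rho_\secp]$, so the definition gives $\tr[M_\secp \rho_\secp/\tr[\rho_\secp]]\cdot \tr[\rho_\secp] = \tr[M_\secp\rho_\secp]$ by linearity. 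Thus the acceptance probability is a linear functional of the (sub-normalized) input.

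The advantage is then $\abs{\tr[M_\secp(\rho_\secp-\sigma_\secp)]}$. I would bound this by Hölder's inequality: $\abs{\tr[M_\secp \Delta]} \le \norm{M_\secp}_\infty \trnorm{\Delta} \le \trnorm{\rho_\secp-\sigma_\secp} \le \varepsilon(\secp)$, since $\norm{M_\secp}_\infty\le 1$. For a self-contained argument, I could instead write the Hermitian operator $\Delta=\Delta_+-\Delta_-$ in its Jordan decomposition. Then $\tr[M\Delta]\le \tr[\Delta_+]\le \trnorm{\Delta}$, and symmetrically $-\tr[M\Delta]\le\tr[\Delta_-]\le\trnorm{\Delta}$. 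This gives the first part.

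For the ``moreover'' part with normalized states, I would sharpen the same argument. Since $\tr[\Delta]=0$, we get $\tr[\Delta_+]=\tr[\Delta_-]=\frac12\trnorm{\Delta}=\td{\rho_\secp}{\sigma_\secp}$. Hence $\abs{\tr[M\Delta]}\le \max(\tr[\Delta_+],\tr[\Delta_-]) = \td{\rho_\secp}{\sigma_\secp}\le\varepsilon(\secp)$.

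There is no real obstacle here. The only point that needs care is verifying that the normalization convention makes acceptance probability linear, so that $M$ can be applied directly to sub-normalized states. This includes handling the degenerate zero-trace case.
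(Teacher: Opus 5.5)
Your proof is correct. The paper states this proposition without proof and treats it as the standard Helstrom--Holevo fact. Your argument is exactly the justification it implicitly relies on: you check that the paper's normalization convention gives acceptance probability $\tr[M_\secp \rho_\secp]$, including the zero-trace case, and then bound $\abs{\tr[M_\secp(\rho_\secp-\sigma_\secp)]}$ via the Jordan decomposition, with the trace-zero refinement supplying the factor $\frac12$ for normalized states.
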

Note that the above proposition is true also for distinguishers that are not polynomially bounded, and in fact we get statistical indistinguishability which implies computational indistinguishability.

We now give a definition of a keyed procedure that generates a pure quantum state. This will allow us to argue about the cryptographic properties of the output.
\begin{definition}[Pure State Generator]
	A $(\keylen, \outqubits)$-pure-state ensemble is a set of $\outqubits$-qubit pure quantum states, indexed by a $\keylen$-bit classical key: $S = \set{\ket{\varphi_i}}_i$. A unitary quantum algorithm $G$ is a (pure-state) generator for the ensemble $S$ if on input $i$ it outputs $\ket{\varphi_i}$, more explicitly, if $G \ket{i} \ket{0^l} = \ket{\varphi_i} \ket{g_i}$, where $\ket{g_i}$ is an $\garbqubits$-qubit ``garbage output'' to be discarded. 
	
	Asymptotically, given a sequence of ensembles $\set{S_\secp}_{\secp}$, where $\secp \in \bbN$ is the security parameter, a unitary QPT algorithm $G=\set{G_\secp}_{\secp}$ is a \emph{pure state generator} for this sequence if $G_\secp$ generates $S_\secp$ for all $\secp$.
\end{definition}

We can now define pseudorandom states which are generated by a pure state generator and have randomness properties.
 
\begin{definition}[Pseudorandom State]
\label{p_def_t_prsg}
    A sequence of ensembles $S = \set{S_\secp}_\secp$ generated by a pure state generator, where $\secp$ is the security parameter, is a \emph{$(\keylen, \outqubits, t)$-pseudorandom state} if $\keylen, \outqubits, t$ are polynomials in $\secp$ such that $S_\secp = \set{\ket{\varphi_{\secp, i}}}_i$ is a $(\keylen(\secp), \outqubits(\secp))$-pure-state ensemble for all $\secp$, and for any QPT distinguishing adversary $\cA = \set{\cA_\secp}_\secp$,
    \begin{flalign*}
        \abs{
        \Pr_{i \gets \binset^{\keylen(\secp)}} \left[ \cA_\secp \left( 
        \ket{\varphi_{\secp, i}}^{\otimes t(\secp)} \right) = 1 \right]
        -
        \Pr_{\ket{\psi} \gets \mu_{\outqubits(\secp)}} \left[ \cA_\secp \left( \ket{\psi}^{\otimes t(\secp)} \right) = 1 \right]
        } \leq \negl(\secp)
        ~,
    \end{flalign*}
    where $\mu_{\outqubits(\secp)}$ is the Haar distribution over $\outqubits(\secp)$-qubit states.
\end{definition}

We frequently write $t$-PRS instead of $(\keylen, \outqubits, t)$-PRS when the parameters $\keylen, \outqubits$ are either arbitrary or clear from the context.
As explained in the introduction, we focus on the non-trivial regime where $\keylen < t \outqubits$, and if a $t$-PRS satisfies this condition we say it has a \emph{stretch} of $t \outqubits - \keylen$.

\begin{claim}
\label{prop_1prs_amplification}
    If there exists an $(\keylen, \outqubits, 1)$-PRS with stretch $\s = \outqubits - \keylen > 0$,  where $\secp$ is the security parameter, then for any polynomial $d \coloneqq d(\secp) > 0$ there exists a $(\ampParam{\keylen}, \ampParam{\outqubits}, 1)$-PRS for $\ampParam{\keylen} = d \keylen$, $\ampParam{\outqubits} = d \outqubits$, with stretch $\ampParam{\s} = d \s$.
\end{claim}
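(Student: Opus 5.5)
The construction is parallel repetition. Given the generator $G$ for the $(\keylen,\outqubits,1)$-PRS, define $\ampParam{G}$ on a key $(i_1,\dots,i_d)\in(\binset^{\keylen})^d$ by running $G$ independently on each $i_j$ with its own fresh ancillas. This gives $\ampParam{G}\ket{i_1,\dots,i_d}\ket{0} = \bigotimes_{j=1}^d \ket{\varphi_{i_j}}\ket{g_{i_j}}$, and after reordering registers the output is $\ket{\varphi_{\bv{i}}} \coloneqq \bigotimes_j \ket{\varphi_{i_j}}$, with the concatenated garbage discarded. This is a unitary QPT pure state generator, since $d$ is polynomial. Its key length is $d\keylen$ and its output length is $d\outqubits$, so the stretch is $d\outqubits - d\keylen = d\s$. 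What remains is single-copy security.

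For a single copy, the Haar-random $\ampParam{\outqubits}$-qubit state has averaged density matrix $I/2^{d\outqubits} = \bigotimes_{j=1}^d I/2^{\outqubits}$. This also equals the averaged state of $d$ independent Haar-random $\outqubits$-qubit states. So it suffices to show
\[
\E_{i_1,\dots,i_d}\Big[\bigotimes_{j=1}^d \ketbra{\varphi_{i_j}}{\varphi_{i_j}}\Big] \cind \bigotimes_{j=1}^d \frac{I}{2^{\outqubits}} ~.
\]
I will prove this by a standard hybrid argument. For $h=0,\dots,d$, let hybrid $H_h$ consist of the first $h$ blocks equal to $I/2^{\outqubits}$ and the remaining $d-h$ blocks equal to independent averaged PRS states $\E_i\ketbra{\varphi_i}{\varphi_i}$. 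Then $H_0$ is the real distribution and $H_d$ is the Haar/uniform one.

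Now suppose a QPT $\cA$ distinguishes $H_0$ from $H_d$ with advantage $\delta(\secp)$. Then there is some $h$ for which it distinguishes $H_{h-1}$ from $H_h$ with advantage at least $\delta/d$. I build a 1-PRS distinguisher $\cB$ that receives a challenge state $\rho$, which is either $\ketbra{\varphi_i}{\varphi_i}$ for random $i$ or a Haar state, equivalently $I/2^{\outqubits}$ on one copy. $\cB$ places $\rho$ in block $h$ and fills blocks $1,\dots,h-1$ with maximally mixed states, which it prepares by sampling uniform classical strings. It fills blocks $h+1,\dots,d$ by sampling fresh keys and running $G$, which is efficient. Finally it runs $\cA$. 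The resulting state is exactly $H_{h-1}$ or $H_h$ depending on the challenge, so $\cB$ has advantage at least $\delta/d$. Because $d$ is polynomial, $\delta$ non-negligible would imply $\delta/d$ non-negligible, contradicting the 1-PRS property. The index $h$ can be hardwired, since the model is non-uniform, or chosen at random, which costs the same factor $1/d$.

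None of these steps is a real obstacle. The only point needing care is that the hybrid argument requires efficiently preparing both kinds of filler blocks. This is exactly where single-copy security helps: the Haar state on one copy coincides with the maximally mixed state, which is a classical uniform mixture and hence efficiently samplable. For $t>1$ copies this argument would fail, and that is why the claim is stated only for $1$-PRS.
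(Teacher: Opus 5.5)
Your proof is correct and follows the paper's argument: concatenate $d$ independent instances, then replace the blocks one at a time in a hybrid argument. Both rely on the fact that a single-copy Haar state on $d\outqubits$ qubits is maximally mixed and therefore splits into $d$ efficiently samplable uniform blocks; your version just spells out the reduction in more detail.

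One small correction to your closing aside: for $t>1$, the more fundamental obstruction is that $t$ copies of a Haar state on $d\outqubits$ qubits are far from a product of blockwise $t$-copy Haar states, so the final hybrid would no longer be the target distribution.
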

\begin{proof}
    The construction is by concatenation. Take $d$ independent copies of the original $1$-PRS $\set{\ket{\varphi_i}}_i$, i.e.\ parse the key $\ampParam{\keylen}$ as $d$ keys $i_j$ of length $\keylen$ and output $\ket{\varphi_{i_1}}\otimes \dots \otimes \ket{\varphi_{i_d}}$. The ``stretch'' calculation is straightforward. Security follows by a hybrid argument from the security of the original PRS, by replacing the states $\ket{\varphi_{i_j}}$ with $\ket{y_j}$ for random $y_j$, one by one. This is possible since for a random $\ampParam{\keylen} = (i_1, \dots, i_d)$ the $i_j$s are random and independent from one another. Furthermore, the density matrix of a single-copy Haar-random state over $\ampParam{\outqubits}$ qubits is maximally mixed and can be written as a product of $d$ maximally mixed states over $\outqubits$ qubits.

\end{proof}

\subsection{Quantum Extractor}
We first present definitions of min-entropy that will be useful for applying quantum extractors.
\begin{definition}[Conditional min-entropy]
\label{p_cond_mentropy}
    Let $\rho_{BE} \in \cH_B \otimes \cH_E$ be a sub-normalized state on subsystems $B$ and $E$. The \emph{conditional min-entropy} of $\rho_{BE}$ is defined as
    $$H_{\infty}(B|E) = H_{\infty}(B|E)_{\rho}  = \sup_{\sigma_E} \set{\lambda \mid \rho_{BE} \preceq 2^{-\lambda} I_B \otimes \sigma_E} ~,$$
    where $\sigma_E$ is any sub-normalized state over system $E$. 
\end{definition}

\begin{definition}[Conditional Smoothed min-entropy]
\label{p_cond_smooth_mentropy}
    Let $\rho_{BE} \in \cH_B \otimes \cH_E$ be a sub-normalized state on subsystems $B$ and $E$, and $\delta > 0$. The \emph{conditional smoothed min-entropy} of $\rho_{BE}$ is defined as
    $$H^{\delta}_{\infty}(B|E) = H^{\delta}_{\infty}(B|E)_{\rho}  = \sup_{\sigma_{BE} \in \cB^{\delta}(\rho)} H_{\infty}(B|E)_{\sigma} ~,$$
    where $\cB^{\delta}(\rho)$ is the ball of radius $\delta$ centered at $\rho_{BE}$, containing sub-normalized states $\sigma_{BE}$ over the system $BE$. This ball is defined under some metric, typically the Purified Distance $P$.\footnote{$P(\rho, \sigma) = \sqrt{1 - F(\rho, \sigma)^2}$ where $F(\rho, \sigma) = \trnorm{\sqrt{\rho} \sqrt{\sigma}}$ is the fidelity function. Also note that $\td{\rho}{\sigma} \leq P(\rho, \sigma)$.}
\end{definition}

We define quantum strong extractors and state a theorem that enables the use of $2$-designs as quantum strong extractors under certain conditions.
\begin{definition}[Quantum Strong Extractor, \cite{ClassicRandExtractors}]
    Let $\ell \in \bbN$ and $B = B_1 B_2$ be a quantum system with $B_1$ and $B_2$ as subsystems, where the subsystem $B_1$ consists of $\ell$ qubits. A collection of quantum unitaries $\set{U^r}_{r \in R}$ acting on system $B$ is called a \emph{$(\kExt, \epsExt, \delta)$-quantum strong extractor} that extracts $\ell$ qubits if for any quantum state $\rho_{BE} \in \cH_B \otimes \cH_E$ with $H^{\delta}_{\infty}(B|E) \geq \kExt$,
    $$\td{
    \frac{1}{|R|} \sum_{r \in R} \ketbra{r}{r} \otimes \tr_{B_2} \left( U^r \rho_{BE} \ {U^r}^{\dagger} \right)
    }{
    \frac{I_R}{|R|} \otimes \frac{I_{B_1}}{|B_1|} \otimes \rho_E
    }
    \leq \epsExt
    ~.$$
\end{definition}

\begin{theorem}[\cite{DecouplingWithUnitary2Designs, ClassicRandExtractors, OneShotDecoupling, MetaQuantCrypto}]
\label{p_extractor_thm}
    Let $\nExt, \ell \in \bbN$, $\kExt \in [-\nExt, \nExt]$, and $\epsExt \in (0, 1)$ such that $\ell \leq \frac{\nExt + \kExt}{2} - \log \left( \frac{1}{\epsExt} \right)$. Then any unitary $2$-design on an $\nExt$-qubit system is a $(\kExt, \epsExt, \epsExt/12)$-quantum strong extractor that extracts $\ell$ qubits.
\end{theorem}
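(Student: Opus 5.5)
The plan is to derive the statement from the one-shot decoupling theorem for unitary $2$-designs \cite{OneShotDecoupling, DecouplingWithUnitary2Designs}, applied to the partial-trace map $\mathcal{T} = \tr_{B_2}$. Two features of the extractor definition make this reduction direct.

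First, the classical register $R$ is block-diagonal. Hence
\[
\td{\tfrac{1}{|R|}\sum_r \ketbra{r}{r}\otimes \tr_{B_2}(U^r\rho_{BE}{U^r}^\dagger)}{\tfrac{I_R}{|R|}\otimes\tfrac{I_{B_1}}{|B_1|}\otimes\rho_E}
= \E_r \, \td{\tr_{B_2}(U^r\rho_{BE}{U^r}^\dagger)}{\tfrac{I_{B_1}}{|B_1|}\otimes\rho_E}.
\]
So the strong-extractor condition is exactly an averaged decoupling statement over the design.

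Second, $\mathcal{T}$ is a fixed CPTP map. Its action on the fully mixed input gives $\mathcal{T}(I_B/|B|) = I_{B_1}/|B_1|$. This is precisely the target state in the decoupling theorem.

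The key steps, in order, are as follows.

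\textbf{(1) Unsmoothed decoupling.} Recall the one-shot decoupling bound for an exact unitary $2$-design $\{U^r\}$:
\[
\E_r \trnormdis{\mathcal{T}(U^r\rho_{BE}{U^r}^\dagger)}{\tau_{B_1}\otimes\rho_E} \le 2^{-\frac12\left(H_2(B|E)_\rho + H_2(A|B_1)_{\omega}\right)}.
\]
Here $\omega_{AB_1}$ is the Choi state of $\mathcal{T}$. In the smoothed form, conditional collision entropy $H_2$ may be replaced by min-entropy $H_\infty$, since $H_2 \ge H_\infty$.

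\textbf{(2) Entropy of the Choi state.} Compute the entropy of the Choi state of $\tr_{B_2}$. Write $A = A_1A_2$ as a copy of $B_1B_2$. Then $\omega = \Phi_{A_1B_1}\otimes \tfrac{I_{A_2}}{2^{\nExt-\ell}}$, where $\Phi$ is maximally entangled. This gives $H_2(A|B_1)_\omega = (\nExt-\ell) - \ell = \nExt - 2\ell$.

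\textbf{(3) Plug in the hypothesis.} Substituting $H_\infty(B|E)\ge \kExt$ gives an exponent of $-\frac12(\kExt+\nExt-2\ell)$. The hypothesis $\ell \le \frac{\nExt+\kExt}{2} - \log(1/\epsExt)$ makes this at most $-\log(1/\epsExt)$. Hence the trace norm is at most $\epsExt$, and the trace distance is at most $\epsExt/2$.

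\textbf{(4) Smoothing.} Take $\sigma_{BE}\in\cB^{\delta}(\rho)$ with $H_\infty(B|E)_\sigma \ge \kExt$, where $\delta=\epsExt/12$. Since purified distance upper-bounds trace distance, $\trnormdis{\rho}{\sigma}\le 2\delta$.

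Now compare the $\rho$-quantity with the $\sigma$-quantity by the triangle inequality:
\begin{itemize}
\item Applying $\tr_{B_2}\circ U^r$ is contractive, so passing from $\rho$ to $\sigma$ in the first argument costs at most $2\delta$ in trace norm.
\item Replacing $\rho_E$ by $\sigma_E$ in the target costs at most another $2\delta$.
\end{itemize}
Because $\sigma$ may be sub-normalized, I would use the sub-normalized version of the decoupling theorem, as in \cite{OneShotDecoupling}.

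\textbf{Expected obstacle.} The main obstacle is constant bookkeeping between these steps. This includes the factor $\tfrac12$ between trace distance and trace norm, the $2\delta$ smoothing losses, and whether the cited decoupling bound carries an extra constant such as a factor $2$ or a $\sqrt{\cdot}$ with additive terms. I would fix the exact form from \cite{OneShotDecoupling}, for instance $\E_r\|\cdot\|_1 \le 2^{-\frac12(H^{\delta}_{\infty}(B|E)+H^{\delta}_{2}(A|B_1))} + 12\delta$. Then $\delta=\epsExt/12$ is chosen so that the total bound becomes a constant multiple of $\epsExt$. If needed, I would absorb that constant by tightening $\log(1/\epsExt)$ by an additive constant, consistent with how \cite{MetaQuantCrypto} states the result.

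Finally, I would note that if the $2$-design is approximate rather than exact, an additional additive error appears. Since the statement speaks of exact designs, that case can be set aside.
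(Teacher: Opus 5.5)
Your proposal is correct, and it is essentially the standard derivation behind this result. The paper gives no proof of its own and imports the theorem from the cited decoupling and extractor works: apply the one-shot decoupling theorem to the channel $\tr_{B_2}$, whose Choi state $\Phi_{A_1B_1}\otimes I_{A_2}/2^{n'-\ell}$ has conditional min- and collision entropy exactly $n'-2\ell$.

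The constant bookkeeping you flag as the main obstacle closes without weakening the hypothesis, so no tightening of $\log(1/\varepsilon)$ is needed:
\begin{itemize}
\item The $\varepsilon/12$ in the statement comes from the additive $12\delta$ term in the smoothed decoupling bound. With $\delta=\varepsilon/12$ that bound gives trace norm at most $\varepsilon+\varepsilon=2\varepsilon$, i.e.\ trace distance at most $\varepsilon$.
\item Your own Step~4 (the unsmoothed bound for $\sigma$, smoothing only $\rho$) does better. It gives trace norm at most $\varepsilon+4\delta=\tfrac{4}{3}\varepsilon$, hence trace distance at most $\tfrac{2}{3}\varepsilon$.
\end{itemize}
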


\subsection{Simulating symmetrization of states}
\label{symmetrization_simulator}
We introduce a symmetrization simulator that takes $t$ pure states as input and outputs a pure state representing a symmetrization of the input, entangled with a random $t$-tuple of distinct elements. While this simulator was originally introduced by \cite{LessIsMore}, we use a slightly modified variant that restricts the range of the $t$-tuple. Furthermore, we analyze the output density matrix of the simulator, which is later used to reduce $t$-PRS security to $1$-PRS security. 

\paragraph{Symmetrization Simulator.} 
Given $\ket{\phi_1}, \ldots, \ket{\phi_t}$, consider the following algorithm: Sample a random distinct $t$-tuple $\bv{i} \gets \dist$ and output the quantum state
$$
\tsimi{\bv{i}}{\ket{\phi_1}, \ldots, \ket{\phi_t}} \coloneqq 
\frac{1}{\sqrt{t!}}\sum_{\pi \in S_t} \bigotimes_{j=1}^t \ket{i_{\pi(j)}}\ket{\phi_{\pi(j)}}
~.$$

\begin{claim}
    The simulator algorithm described is QPT. 
\end{claim}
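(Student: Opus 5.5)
We prove the claim by giving an explicit polynomial-size circuit that prepares the simulator's output state, using the standard ``symmetrize over an index register'' technique. The idea is to avoid building the $t!$-term superposition directly. Instead we attach the classical labels $i_1,\dots,i_t$ to the inputs, coherently apply a uniformly random permutation, and then erase the permutation register. Erasure is possible because the permutation can be recomputed from the (distinct) labels.

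\textbf{Sampling.} We first sample $\bv{i}\gets\dist$ classically. We pick $i_1,\dots,i_t$ one at a time, uniformly among the values not used so far. Each step is poly-time rejection sampling, since $t\le 2^{n-1}$ in our regime, or it can be done by direct indexing. We then prepare the pairs $\ket{i_j}\ket{\phi_j}$ for $j\in[t]$, ordered as $j=1,\dots,t$.

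\textbf{Coherent permutation.} We prepare a uniform superposition $\frac{1}{\sqrt{t!}}\sum_{\pi\in S_t}\ket{\pi}$ over an encoding of permutations. A convenient encoding is the Lehmer code or a sequence of transposition choices $(c_1,\dots,c_t)$ with $c_j\in[j]$, which is a product of uniform superpositions over ranges of size $j$. Each such superposition is preparable to exponentially small error, or exactly for general uniform superpositions over $[j]$ using standard circuits. Controlled on $\ket{\pi}$, we apply the permutation of the $t$ register pairs. This is implemented as a sequence of controlled swaps of whole $(n+|\phi|)$-qubit blocks, so its cost is $\poly(t,n,m)$. The result is $\frac{1}{\sqrt{t!}}\sum_\pi \ket{\pi}\otimes\bigotimes_j\ket{i_{\pi(j)}}\ket{\phi_{\pi(j)}}$.

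\textbf{Uncomputing $\pi$.} Since the entries of $\bv{i}$ are distinct and $\bv{i}$ is known classically, $\pi$ is a deterministic function of the contents of the index registers, namely the sequence $(i_{\pi(1)},\dots,i_{\pi(t)})$. We therefore compute $\pi$ reversibly from the index registers, by comparing each entry with the classical list $\bv{i}$ and converting to the chosen encoding. We XOR it into the permutation register, which returns that register to $\ket{0}$, and then uncompute the workspace. What remains is exactly $\tsimi{\bv{i}}{\ket{\phi_1},\ldots,\ket{\phi_t}}$.

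\textbf{Main obstacle.} The main care point is this disentangling step. It requires distinctness of $\bv{i}$, which is why $\bv{i}$ is drawn from $\dist$; with a repeated label, $\pi$ would not be recoverable. We also need the inverse map from a permuted sequence back to the Lehmer code to be computable reversibly in polynomial time, which is a straightforward sorting/ranking computation. Every step uses $\poly(t,n,m)$ gates, so the simulator is QPT.
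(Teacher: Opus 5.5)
Your proposal is correct and follows the same route as the paper: sample a distinct $\bv{i}$, use a uniform superposition over $S_t$ in an auxiliary register to coherently permute the $t$ register pairs, and then uncompute $\pi$. The uncomputation works because, with $\bv{i}$ distinct and known, $\pi$ is determined by the permuted labels. You also supply implementation details the paper leaves implicit (the Fisher--Yates encoding, block-wise controlled swaps, reversible ranking), and these are fine.
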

\begin{proof}
	Sample a random distinct $\bv{i}$. Then, in an auxiliary register, take a uniform superposition over all permutations $\frac{1}{\sqrt{t!}}\sum_{\pi \in S_t} \ket{\pi}$, and use it to compute the state 
    $$
    \frac{1}{\sqrt{t!}}\sum_{\pi \in S_t} \ket{\pi} \bigotimes_{j=1}^t \ket{i_{\pi(j)}} \ket{\phi_{\pi(j)}}~.
    $$
    Then uncompute $\pi$ using $\bv{i}, \bv{{\pi(i)}}$ (since $\bv{i}$ is distinct, given $\bv{i}$ and its permuted version allows to compute $\pi$ and therefore to uncompute the auxiliary register).
\end{proof}

\begin{claim}
\label{simulator_output}
    Given input $\ket{\phi_1}, \ldots, \ket{\phi_t}$, the output density matrix of the simulator is
    $$\frac{1}{\abs{\dist}} \sum_{\bv{i} \in \dist, \pi \in S_t} \bigotimes_{j=1}^t 
    \left( \ketbra{i_{j}}{i_{j}} \otimes \ketbra{\phi_{j}}{\phi_{j}} \right) P_{\pi}
    $$
\end{claim}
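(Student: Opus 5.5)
We compute the mixture directly. For a fixed $\bv{i}\in\dist$, write $\ket{v_{\bv{i}}} = \tsimi{\bv{i}}{\ket{\phi_1},\ldots,\ket{\phi_t}}$. The output density matrix is then $\frac{1}{\abs{\dist}}\sum_{\bv{i}\in\dist}\ketbra{v_{\bv{i}}}{v_{\bv{i}}}$. The plan is to expand each outer product as a double sum over $\pi,\sigma\in S_t$, and then use the fact that $\dist$ is closed under permutations of coordinates to reindex the sum over $\bv{i}$.

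First, we express $\ket{v_{\bv{i}}}$ using permutation operators. Set $\ket{w_{\bv{i}}} = \bigotimes_{j}\ket{i_j}\ket{\phi_j}$, where $P_\pi$ acts on $t$ copies of the joint register (index register together with state register). Then the term of $\ket{v_{\bv{i}}}$ indexed by $\pi$ is $\bigotimes_j \ket{i_{\pi(j)}}\ket{\phi_{\pi(j)}}$, which equals $P_{\pi'}\ket{w_{\bv{i}}}$ for $\pi'=\pi$ or $\pi^{-1}$ depending on the convention in the definition of $P_\pi$. Since $\pi \mapsto \pi'$ is a bijection of $S_t$, this gives $\ket{v_{\bv{i}}}=\frac{1}{\sqrt{t!}}\sum_{\pi}P_\pi\ket{w_{\bv{i}}}$. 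Hence
\[
\ketbra{v_{\bv{i}}}{v_{\bv{i}}}=\frac{1}{t!}\sum_{\pi,\sigma}P_\pi\ketbra{w_{\bv{i}}}{w_{\bv{i}}}P_\sigma^\dagger .
\]

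Second, we sum over $\bv{i}$. Note that $P_\pi\ketbra{w_{\bv{i}}}{w_{\bv{i}}}P_\pi^\dagger$ equals $\ketbra{w'}{w'}$, where $w'$ pairs the index $i_{\pi(j)}$ with the state $\phi_{\pi(j)}$. This pairing is not the same as the unpermuted one, so we must keep track of it. We therefore write $P_\pi\,\cdot\,P_\sigma^\dagger = P_\pi\,(\cdot\,P_{\pi^{-1}\sigma}^{\dagger}... )$. Concretely, we substitute $\sigma=\pi\tau$ and use $P_{\pi\tau}=P_\pi P_\tau$, which holds up to the same convention choice. This gives
\[
\frac{1}{t!}\sum_{\pi,\tau}P_\pi\bigl(\ketbra{w_{\bv{i}}}{w_{\bv{i}}}P_\tau^\dagger\bigr)P_\pi^\dagger .
\]

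The remaining step is to show that conjugation by $P_\pi$, after summing over $\bv{i}\in\dist$, can be absorbed. Conjugating by $P_\pi$ permutes the tensor factors of both the index registers and the state registers. The index part $\sum_{\bv{i}}\bigotimes_j\ketbra{i_j}{i_j}$ is invariant under this permutation, but the state part $\bigotimes_j\ketbra{\phi_j}{\phi_j}$ is not. So the $\pi$-conjugation does not disappear by itself.

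We resolve this by observing that the claimed expression is itself $\Pi_\tau$-invariant in the right way. Let $A=\sum_{\bv{i}}\bigotimes_j(\ketbra{i_j}{i_j}\otimes\ketbra{\phi_j}{\phi_j})$. Because $\bv{i}$ ranges over distinct tuples, the only nonzero cross terms between the $\pi$-th and $\sigma$-th summands of $\ket{v_{\bv{i}}}$ are those we have written out. The claim then amounts to the identity
\[
\frac{1}{\abs{\dist}}\cdot\frac{1}{t!}\sum_{\bv{i},\pi,\sigma}P_\pi\ketbra{w_{\bv{i}}}{w_{\bv{i}}}P_\sigma^\dagger=\frac{1}{\abs{\dist}}\sum_{\pi}A\,P_\pi .
\]
To prove it, we compare matrix entries in the computational basis of the index registers, which are orthogonal across distinct tuples. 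A term with left index $\bv{i}\circ\pi$ and right index $\bv{i}\circ\sigma$ is reindexed as $\bv{i}'=\bv{i}\circ\pi$, which is again distinct. Under this reindexing the state factors become $\phi_{\pi(j)}$ paired with $i'_j$. This is exactly where the main obstacle lies: the stated formula pairs $i_j$ with $\phi_j$, not with $\phi_{\pi(j)}$.

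I therefore expect the stated form to hold only after reading the product $A P_\pi$ as a formal expression, or after symmetrizing over labelings, and would verify the index bookkeeping carefully there. If exact equality fails, I would check whether the factor $t!$ in $\abs{\dist}$ versus $\binom{2^n}{t}$ accounts for the discrepancy.
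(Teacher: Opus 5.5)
\textbf{Verdict: there is a genuine gap, but it reflects a real problem with the claim itself, which the paper's proof shares.}

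\textbf{Why your proposal is incomplete.} Your proposal does not reach a proof. It stops at the obstacle, and your assertion that the claimed expression is ``invariant in the right way'' is unsupported. The escape routes you suggest also fail. Normalization is not the issue: both sides already have trace one, since on the right-hand side only $\pi=\mathrm{id}$ contributes to the trace when $\bv{i}$ is distinct. Reading $AP_\pi$ ``formally'' does not change which operator it is.

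\textbf{The claim is false for fixed inputs.} The obstacle you found is genuine, and it refutes the claim as literally stated. Take $t=2$, $\ket{\phi_1}\perp\ket{\phi_2}$, $a\neq b$, and let $x=\ket{a}\ket{\phi_1}\ket{b}\ket{\phi_2}$ and $y=\ket{b}\ket{\phi_2}\ket{a}\ket{\phi_1}$. Write $O$ for the claimed right-hand side. Then $\bra{x}O\ket{y}=1/\abs{\dist}$, coming from $\bv{i}=(a,b)$ and the transposition, whereas $\bra{y}O\ket{x}=0$. So $O$ is not even Hermitian and cannot be the simulator's output.

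\textbf{Where the paper's proof fails.} It misses exactly the point you raised. Its assertion that each term ``appears exactly $t!$ times, once for each permutation $\bv{i'}$ of $\bv{i}$'' is valid only on the index registers. Substituting $i'_j=i_{\pi'(j)}$ and $\sigma=(\pi')^{-1}\pi$ turns the general term into $\bigotimes_j\ketbra{i'_j}{i'_{\sigma(j)}}\otimes\ketbra{\phi_{\pi'(j)}}{\phi_{\pi'(\sigma(j))}}$. Its state part still depends on $\pi'$ unless all the $\ket{\phi_j}$ coincide up to phase.

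\textbf{What is true, and how to get it.} Your instinct that the identity holds only ``after symmetrizing over labelings'' is right, and your first step already makes it precise. Set $M=\sum_{\bv{i}\in\dist}\bigotimes_j\left(\ketbra{i_j}{i_j}\otimes\ketbra{\phi_j}{\phi_j}\right)$ and $\Pi_{\mathrm{Sym}}=\frac{1}{t!}\sum_{\pi}P_\pi$.
\begin{itemize}
\item You have $\ket{v_{\bv{i}}}=\sqrt{t!}\,\Pi_{\mathrm{Sym}}\ket{w_{\bv{i}}}$, so the output is exactly $\frac{t!}{\abs{\dist}}\Pi_{\mathrm{Sym}}M\Pi_{\mathrm{Sym}}$.
\item The claimed expression equals $\frac{t!}{\abs{\dist}}M\Pi_{\mathrm{Sym}}$.
\item These coincide iff $M$ commutes with $\Pi_{\mathrm{Sym}}$. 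Since $\dist$ is closed under permutations, this holds whenever the state factor $\bigotimes_j\ketbra{\phi_j}{\phi_j}$, or its expectation over random inputs, is invariant under permuting the $t$ slots.
\item Both expressions are linear in that factor. So the identity does hold after averaging over any exchangeable input distribution, e.g.\ i.i.d.\ inputs.
\end{itemize}

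\textbf{Consequence for the paper.} The averaged form is the only way the claim is used. To simulate $\rho_2$ and $\rho_3$, the inputs $\ket{r_j}\ket{b_j}\ket{\varphi_{s_j}}$ (resp.\ $\ket{r_j}\ket{b_j}\ket{y_j}$) are i.i.d.\ across $j$. Hence the reduction for $\hbd_2\cind\hbd_3$ survives once the claim is restated in this averaged form. The intermediate hybrids of the $1$-PRS reduction only need to be efficiently simulable, not to have this form.
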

\begin{proof}
    Denote the output density matrix by $\tsim{\ket{\phi_1}, \ldots, \ket{\phi_t}}$. By definition,
    \begin{flalign*}
        \tsim{\ket{\phi_1}, \ldots, \ket{\phi_t}} &=
        \E_{\bv{i} \gets \dist} \left[ \tsimi{\bv{i}}{\ket{\phi_1}, \ldots, \ket{\phi_t}} \ \tsimi{\bv{i}}{\ket{\phi_1}, \ldots, \ket{\phi_t}}^{\dagger} \right] \\
        &= \frac{1}{\abs{\dist}} \sum_{\bv{i} \in \dist}
        \frac{1}{t!} \sum_{\pi, \pi' \in S_t}
        \bigotimes_{j=1}^t \ketbra{i_{\pi'(j)}}{i_{\pi(j)}} \otimes \ketbra{\phi_{\pi'(j)}}{\phi_{\pi(j)}}
        ~.
    \end{flalign*}
Fixing a $t$-tuple $\bv{i}$ and permutation $\pi$, notice that the term $\bigotimes_{j=1}^t \ketbra{i_j}{i_{\pi(j)}} \otimes \ketbra{\phi_j}{\phi_{\pi(j)}}$ appears exactly $t!$ times in the summation, exactly once for each $\bv{i'}$ that is a permutation of $\bv{i}$, that is, $\exists \pi'$ such that $\pi'(\bv{i'}) = \bv{i}$. This determines the second permutation $\pi''$ such that $\pi''(\bv{i'}) = \pi(\bv{i}) = \pi(\pi'(\bv{i}))$. Therefore, summation over $\bv{i}, \pi, \pi'$ collapses to summation over $\bv{i}, \pi$ and the fraction $\frac{1}{t!}$ cancels out.
\begin{flalign*}
    \tsim{\ket{\phi_1}, \ldots, \ket{\phi_t}} &=
    \frac{1}{\abs{\dist}} \sum_{\bv{i} \in \dist}
    \frac{1}{t!} \sum_{\pi, \pi' \in S_t}
    \bigotimes_{j=1}^t \ketbra{i_{\pi'(j)}}{i_{\pi(j)}} \otimes \ketbra{\phi_{\pi'(j)}}{\phi_{\pi(j)}} \\
    &= \frac{1}{\abs{\dist}} \sum_{\bv{i} \in \dist}
    \sum_{\pi \in S_t}
    \bigotimes_{j=1}^t \ketbra{i_j}{i_{\pi(j)}} \otimes \ketbra{\phi_j}{\phi_{\pi(j)}} \\
    &= \frac{1}{\abs{\dist}} \sum_{\bv{i} \in \dist, \pi \in S_t} \bigotimes_{j=1}^t 
    \left( \ketbra{i_{j}}{i_{j}} \otimes \ketbra{\phi_{j}}{\phi_{j}} \right) P_{\pi}
\end{flalign*}
\end{proof}

\subsection{Auxiliary Claims}
For a seeded function family $\cF$, we denote $f \gets \cF$ for sampling a random seed for a function from the family, and associate $f$ both with the seed itself and with the function it defines. We denote by $\ell_{\cF}$ the seed length for sampling  a function from $\cF$. 
We state some useful claims that will help us prove that our construction is a $t$-PRS.

\begin{proposition}
\label{prop_indp_expect}
    Let $\cF \subseteq \cX \to \cY$ be a $t$-wise independent function family, and let $\set{\rho_y}_{y \in \cY}$ be a family of sub-normalized states. Then for every $t$ \emph{distinct} inputs $i_1, \dots, i_t \in \cX$,
    $$
    \E_{f \gets \cF} \left[\bigotimes_{j=1}^t \rho_{f(i_j)}\right]
    =
    \bigotimes_{j=1}^t \E_{f \gets \cF} \left[\rho_{f(i_j)}\right]
    ~.$$
\end{proposition}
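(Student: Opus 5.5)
The plan is to expand the expectation over the joint distribution of the values $(f(i_1),\dots,f(i_t))$ and then invoke the definition of $t$-wise independence directly. Since the operator $\bigotimes_{j=1}^t \rho_{f(i_j)}$ depends on $f$ only through the tuple $\bv{y} = (f(i_1),\dots,f(i_t)) \in \cY^t$, we can write
\[
\E_{f \gets \cF}\left[\bigotimes_{j=1}^t \rho_{f(i_j)}\right] = \sum_{\bv{y} \in \cY^t} \Pr_{f \gets \cF}\left[\forall j:\ f(i_j)=y_j\right] \bigotimes_{j=1}^t \rho_{y_j}~.
\]
This uses linearity of expectation for operator-valued random variables over the finite seed space. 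If $\cY$ is infinite, the sum should be read as an integral, but in our application $\cY$ is finite.

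Next comes the key step, which uses that the inputs $i_1,\dots,i_t$ are distinct. By $t$-wise independence of $\cF$, the tuple $(f(i_1),\dots,f(i_t))$ is uniformly distributed on $\cY^t$. Equivalently, its law factors as $\prod_j \Pr_f[f(i_j)=y_j]$, with each marginal uniform on $\cY$. Substituting this into the sum, the joint probability becomes a product of $t$ factors $p(y_j)$. We then use multilinearity of the tensor product to rewrite
\[
\sum_{\bv{y}} \prod_j p(y_j) \bigotimes_j \rho_{y_j} = \bigotimes_{j=1}^t \sum_{y \in \cY} p(y)\rho_y~.
\]
Finally, each factor $\sum_y p(y)\rho_y$ equals $\E_{f\gets\cF}[\rho_{f(i_j)}]$, applying the same expansion to a single coordinate. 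This gives the claimed identity.

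There is no real obstacle here. The only points needing care are two. First, distinctness is exactly what lets $t$-wise independence apply; with repeated inputs, $f(i_j)=f(i_k)$ is forced and factorization fails. Second, the exchange of sum and tensor product is justified by multilinearity, which holds for any operators, including sub-normalized states.
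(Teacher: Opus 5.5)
Your proof is correct. The paper states this proposition without proof, and your argument is the standard justification it omits: expand over the joint law of $(f(i_1),\dots,f(i_t))$, factor that law using $t$-wise independence on the distinct inputs, and pull the sum through the tensor product by multilinearity. The same argument works verbatim when the states also depend on the coordinate, i.e.\ for families $\rho^{(j)}_y$. This is the form in which the paper actually uses the proposition, e.g.\ for $\sigma_1^{i_j,f_4}$, whose $C$-register holds $\ket{i_j}$.
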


\begin{proposition}
\label{prop_distance_pairs_tensor}
    For any two families of quantum states $\set{\rho_i}_i$, $\set{\sigma_i}_i$, if $\td{\rho_i}{\sigma_i} \leq \varepsilon$ for each $i$ then for any $t$, 
    $$\td{\bigotimes_{i=1}^t\rho_i}{\bigotimes_{i=1}^t\sigma_i}
    \leq t \varepsilon ~.$$
\end{proposition}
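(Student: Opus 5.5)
We prove Proposition~\ref{prop_distance_pairs_tensor} by a standard hybrid (telescoping) argument combined with the fact that trace distance does not increase when tensoring with a fixed state. For $k = 0, \dots, t$ define the hybrid states
$$\tau_k = \bigotimes_{i=1}^{k} \sigma_i \otimes \bigotimes_{i=k+1}^{t} \rho_i ~,$$
so that $\tau_0 = \bigotimes_{i=1}^t \rho_i$ and $\tau_t = \bigotimes_{i=1}^t \sigma_i$. Trace distance is a metric, so the triangle inequality gives
$$\td{\tau_0}{\tau_t} \leq \sum_{k=1}^{t} \td{\tau_{k-1}}{\tau_k} ~.$$

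Consecutive hybrids $\tau_{k-1}$ and $\tau_k$ differ only in the $k$-th tensor factor. Setting $A = \bigotimes_{i<k}\sigma_i$ and $B = \bigotimes_{i>k}\rho_i$, we have
$$\tau_{k-1} - \tau_k = A \otimes (\rho_k - \sigma_k) \otimes B ~.$$
The trace norm is multiplicative under tensor products, since the singular values of $X \otimes Y$ are the products of the singular values of $X$ and of $Y$. Hence
$$\trnorm{\tau_{k-1}-\tau_k} = \trnorm{A}\,\trnorm{\rho_k-\sigma_k}\,\trnorm{B} = \trnorm{\rho_k - \sigma_k} ~,$$
where the last equality uses that $A$ and $B$ are quantum states and so have unit trace norm. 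Dividing by $2$ gives $\td{\tau_{k-1}}{\tau_k} = \td{\rho_k}{\sigma_k} \leq \varepsilon$.

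Summing over $k = 1, \dots, t$ yields the bound $t\varepsilon$. No step here is a real obstacle. The only point worth noting is the multiplicativity of the trace norm under tensor products, which one could instead obtain from monotonicity of trace distance under appending a fixed state.
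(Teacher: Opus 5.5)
Your proposal is correct and complete: the triangle inequality over the hybrids $\tau_0,\dots,\tau_t$, together with multiplicativity of the trace norm under tensor products (so that $\trnorm{A\otimes(\rho_k-\sigma_k)\otimes B} = \trnorm{\rho_k-\sigma_k}$ for states $A,B$), gives exactly the bound $t\varepsilon$. The paper states this proposition without proof as a standard fact, and your hybrid argument is the usual proof it relies on.
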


\begin{proposition}
\label{prop_norm_right_mult}
    Let $\rho,\sigma$ be sub-normalized states, and let $U$ be a unitary operator. If 
    $\trnormdis{\rho}{\sigma} \leq \varepsilon$, then $\trnormdis{\rho U}{\sigma U} \leq \varepsilon$.
\end{proposition}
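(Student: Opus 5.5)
The statement to prove is Proposition~\ref{prop_norm_right_mult}: right-multiplying by a unitary does not increase the trace-norm distance between $\rho$ and $\sigma$. The plan is to reduce it to unitary invariance of the trace norm. By linearity, $\rho U - \sigma U = (\rho - \sigma)U$, so it suffices to show that $\trnorm{AU} = \trnorm{A}$ for any linear operator $A$ and any unitary $U$. Applying this with $A = \rho - \sigma$ then gives $\trnormdis{\rho U}{\sigma U} = \trnormdis{\rho}{\sigma} \le \varepsilon$. In fact this yields equality, which is stronger than what is claimed.

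For the invariance step we use the definition of the trace norm given in the preliminaries, $\trnorm{A} = \tr\sqrt{A^\dagger A}$, which equals the sum of the singular values of $A$. First compute
\[
(AU)^\dagger (AU) = U^\dagger A^\dagger A U .
\]
This is a unitary conjugate of $A^\dagger A$. Since $U^\dagger \sqrt{A^\dagger A}\, U$ is PSD and squares to $U^\dagger A^\dagger A U$, uniqueness of the PSD square root gives
\[
\sqrt{U^\dagger A^\dagger A U} = U^\dagger \sqrt{A^\dagger A}\, U .
\]
Taking the trace and using cyclicity, $\tr[U^\dagger \sqrt{A^\dagger A}\, U] = \tr\sqrt{A^\dagger A}$, so $\trnorm{AU} = \trnorm{A}$.

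An alternative route for the invariance step is through the singular value decomposition. If $A = V\Sigma W^\dagger$, then $AU = V\Sigma (U^\dagger W)^\dagger$. Since $U^\dagger W$ is unitary, this is again an SVD with the same $\Sigma$, so $AU$ has the same singular values as $A$.

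There is no real obstacle here. The only point needing care is that $\rho U$ is generally not Hermitian or PSD, so we cannot reason with eigenvalues. That is why the argument goes through $A^\dagger A$ (equivalently, singular values) rather than the spectrum of $\rho U$ itself, and the sub-normalization of $\rho$ and $\sigma$ plays no role.
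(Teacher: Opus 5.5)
Your proof is correct. The paper states this proposition without proof as a standard auxiliary fact, and the justification it relies on is exactly the unitary invariance of the trace norm that you establish (via $\sqrt{U^\dagger A^\dagger A U} = U^\dagger \sqrt{A^\dagger A}\, U$, or via the SVD), which even gives the equality $\trnormdis{\rho U}{\sigma U} = \trnormdis{\rho}{\sigma}$, and you rightly avoid arguing with eigenvalues since $\rho U$ need not be Hermitian.
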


\begin{proposition}
\label{prop_distance_expect}
    Let $\cD$ be some distribution, and $\set{\rho_d}_{d \in \cD}$, $\set{\sigma_d}_{d \in \cD}$ two families of states over this distribution. Then 
    $$\trnormdis{\E_{d \gets \cD}[\rho_d]}{\E_{d \gets \cD}[\sigma_d]} \leq \E_{d \gets \cD}[\trnormdis{\rho_d}{\sigma_d}] ~.$$
    In particular, if for all $d$ it holds that $\trnormdis{\rho_d}{\sigma_d} \leq \varepsilon$, then $\trnormdis{\E_{d \gets \cD}[\rho_d]}{\E_{d \gets \cD}[\sigma_d]} \leq \varepsilon$.
\end{proposition}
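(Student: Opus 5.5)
The statement is Proposition~\ref{prop_distance_expect}. I will prove it directly from the triangle inequality (convexity) of the trace norm. No earlier results are needed.

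First I rewrite the left-hand side. By linearity of expectation, $\E_{d \gets \cD}[\rho_d] - \E_{d \gets \cD}[\sigma_d] = \E_{d \gets \cD}[\rho_d - \sigma_d]$. If $\cD$ has finite support, this is $\sum_d \Pr[d]\,(\rho_d - \sigma_d)$. That is the only case relevant in the paper, since all seeds are finite bit strings.

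Next I bound the norm. The trace norm is a norm (sum of singular values), so it satisfies the triangle inequality and is absolutely homogeneous. Applying these with the nonnegative weights $\Pr[d]$ gives
\[
\trnorm{\sum_d \Pr[d](\rho_d-\sigma_d)} \le \sum_d \Pr[d]\,\trnorm{\rho_d-\sigma_d} = \E_{d\gets\cD}\left[\trnormdis{\rho_d}{\sigma_d}\right].
\]
If one wants general (e.g.\ continuous) distributions, the same statement follows. One route is Jensen's inequality for the convex, continuous function $\trnorm{\cdot}$ on a finite-dimensional space. The other is to approximate $\cD$ by finitely supported distributions and pass to the limit.

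The ``in particular'' part is immediate: if every term satisfies $\trnormdis{\rho_d}{\sigma_d} \le \varepsilon$, then the expectation of these terms is at most $\varepsilon$.

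There is no real obstacle here. The only point needing care is the measure-theoretic case of non-discrete $\cD$, and I will restrict to finite support.
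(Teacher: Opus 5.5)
Your proof is correct. Rewriting the difference as $\E_{d}[\rho_d-\sigma_d]$ and applying the triangle inequality and absolute homogeneity of $\trnorm{\cdot}$ is the standard argument; the paper states this proposition without proof as a standard fact. Your argument also never uses positivity or normalization, so it holds for arbitrary operators, which matters because the paper applies it to operators that are not states, such as $\E_{f_4}[\sigma_1^{\bv{i}, f_4}] P_{\pi}$.
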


\begin{proposition}
\label{prop_otp}
    Let $\rho$ be a quantum state on register $A$, and assume $A$ is split into two registers $A = A_1 A_2$ such that $A_2$ holds $q$ qubits. Then $$\E_{x \gets \binset^q, z \gets \binset^q} \left[ 
    (I_{A_1} \otimes X^x_{A_2} Z^z_{A_2}) 
    \ \rho \
    (I_{A_1} \otimes Z^z_{A_2} X^x_{A_2})
    \right]
    = \tr_{A_2}(\rho) \otimes \frac{I_{A_2}}{\abs{I_{A_2}}}
    ~.$$
\end{proposition}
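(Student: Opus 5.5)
By linearity, it suffices to prove the statement on a spanning set of operators and then extend. I will split $\rho$ along the tensor decomposition of $A = A_1 A_2$ and compute how the twirl acts on each piece.

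First I reduce to operators of the form $M_1 \otimes M_2$. Write $\rho = \sum_k M^{(k)}_1 \otimes M^{(k)}_2$, with $M^{(k)}_1$ acting on $A_1$ and $M^{(k)}_2$ acting on $A_2$. Both sides of the claimed identity are linear in $\rho$, and the conjugating operators act as the identity on $A_1$. Moreover, $\tr_{A_2}(M_1\otimes M_2) = \tr(M_2)\, M_1$. So it is enough to show, for every operator $M$ on $A_2$,
\[
\E_{x,z}\left[X^x Z^z M Z^z X^x\right] = \tr(M)\,\frac{I_{A_2}}{2^q}.
\]

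Next I reduce further to a single qubit. The operators $X^xZ^z$ factor as tensor products over the $q$ qubits of $A_2$, and $x,z$ are uniform on $\binset^q$, i.e.\ independent and uniform per qubit. Hence the twirl factors as $\mathcal{T}^{\otimes q}$, where $\mathcal{T}(N) = \frac14\sum_{a,b\in\binset} X^aZ^b N Z^b X^a$. By linearity again, I may take $M$ to be a product $N_1\otimes\cdots\otimes N_q$. It then suffices to prove the one-qubit identity $\mathcal{T}(N) = \tr(N)\, I/2$.

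For the one-qubit identity, I expand $N$ in the Pauli basis $\{I,X,Y,Z\}$. The identity component is fixed by every conjugation, and the prefactor matches, since $\tr(I)=2$. Each non-identity Pauli $P$ commutes with two of the four operators $X^aZ^b$ and anticommutes with the other two. So $P$ picks up the sign $+1$ twice and $-1$ twice, and averages to $0$. Therefore $\mathcal{T}(N) = \frac{\tr(N)}{2}I$. Taking tensor products then gives $\tr(M)\, I/2^q$, and reassembling the terms of $\rho$ gives $\tr_{A_2}(\rho)\otimes I_{A_2}/2^q$.

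Finally, I note that the expression $\abs{I_{A_2}}$ in the statement is to be read as the dimension $2^q$, i.e.\ $\tr(I_{A_2})$. No step here is a real obstacle. The only point needing care is that the twirl factorizes across the qubits of $A_2$, and this holds because the key bits are independent.
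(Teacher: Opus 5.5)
Your proof is correct and complete. The reduction to $\E_{x,z}[X^xZ^z M Z^z X^x] = \tr(M)\, I_{A_2}/2^q$ for operators $M$ on $A_2$ is sound. So is the factorization of the twirl over the $q$ independent pairs of key bits, and so is the single-qubit Pauli computation. Reading $\abs{I_{A_2}}$ as $2^q$ is also the intended meaning.

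The paper gives no proof of this proposition. It lists it among the auxiliary facts as the standard quantum one-time pad, so there is no route of the paper's to compare yours against. Your argument is the standard Pauli-twirl proof.

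You could skip the per-qubit reduction. Conjugation by $X^xZ^z$ multiplies the $q$-qubit Pauli $X^aZ^b$ by $(-1)^{a\cdot z + b\cdot x}$, and this averages to zero unless $a=b=0$. That gives the result in one step.
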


\begin{proposition}
\label{mult_1_plus_eps_i}
Let $\epsilon_1, \dots, \epsilon_t > 0$ and $\epsilon = \sum_{j=1}^t \epsilon_j$ such that $\epsilon <1$. Then $\prod_{j=1}^t (1+\epsilon_j) \le e^{\epsilon} \le 1+\epsilon+\epsilon^2$.
\end{proposition}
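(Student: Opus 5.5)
We prove the two inequalities in turn. For the first, we use that each $\epsilon_j>0$ satisfies $1+\epsilon_j \le e^{\epsilon_j}$. This follows from convexity of the exponential: the tangent line to $e^x$ at $x=0$ is $1+x$, and it lies below the graph. Multiplying these $t$ inequalities, all of whose sides are positive, gives
\[
\prod_{j=1}^t (1+\epsilon_j) \le \prod_{j=1}^t e^{\epsilon_j} = e^{\sum_j \epsilon_j} = e^{\epsilon}.
\]

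For the second inequality, we expand $e^{\epsilon}=\sum_{k\ge 0}\epsilon^k/k!$ and compare it term by term with $1+\epsilon+\epsilon^2$. Since $0<\epsilon<1$, the tail satisfies
\[
\sum_{k\ge 2}\frac{\epsilon^k}{k!} \le \frac{\epsilon^2}{2}\sum_{k\ge 0}\epsilon^k = \frac{\epsilon^2}{2(1-\epsilon)},
\]
and this is at most $\epsilon^2$ only when $\epsilon\le 1/2$. So the geometric bound does not cover the whole range, and a tighter tail estimate is needed.

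We therefore bound each tail term by $1/k!\le 1/2^{k-1}$, valid for $k\ge 2$, and use $\epsilon^k\le\epsilon^2$ for $k\ge 2$, which holds because $\epsilon<1$. This gives
\[
\sum_{k\ge 2}\frac{\epsilon^k}{k!} \le \epsilon^2\sum_{k\ge 2}\frac{1}{k!} = \epsilon^2(e-2) \le \epsilon^2,
\]
using $e-2<1$. Hence $e^{\epsilon}\le 1+\epsilon+\epsilon^2$ for all $\epsilon\in(0,1)$.

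No step here is a real obstacle. The only point requiring care is the tail estimate, where the constant must be handled uniformly for all $\epsilon<1$ rather than only for $\epsilon\le 1/2$; the bound via $e-2<1$ does this.
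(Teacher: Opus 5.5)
Your proof is correct. The paper states this proposition without proof, as a standard fact, and your argument is the standard one: $1+x\le e^x$ gives the first inequality, and bounding the Taylor tail by $\epsilon^2\sum_{k\ge 2}1/k! = (e-2)\epsilon^2\le\epsilon^2$ gives the second on all of $(0,1)$.

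Two small remarks. First, you announce the bound $1/k!\le 1/2^{k-1}$ but never use it. It would also work, since $\sum_{k\ge 2}2^{-(k-1)}=1$, so you can keep either route and drop the other. Second, your argument never needs the $\epsilon_j$ to be strictly positive. This is worth noting, because the paper applies the proposition in the proof of Claim~\ref{close_to_haar} with $\epsilon_0 = 0/d = 0$.
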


\begin{theorem}[\cite{Pseudorandomness}, Corollary 3.34]
\label{t_family_existence}
    For any $t, \lambda, n \in \bbN$ there is a family of $t$-wise independent functions $\cF \subseteq \binset^{\lambda} \to \binset^n$ such that choosing a random function from $\cF$ takes $t \cdot \max \set{\lambda, n}$ random bits, that is, $\ell_{\cF} = t \cdot \max \set{\lambda, n}$. Moreover, evaluating any function from $\cF$ takes time $poly(n,\lambda, t)$.
\end{theorem}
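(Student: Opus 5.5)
The plan is to use the standard polynomial-based construction, in the spirit of \cite{Pseudorandomness}. Set $N = \max\{\secp, n\}$ and identify $\binset^N$ with the finite field $\mathbb{F}_{2^N}$. Fix an irreducible polynomial of degree $N$ over $\mathbb{F}_2$; such a polynomial can be found deterministically in time $\poly(N)$, for example by Shoup's algorithm.

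A function in the family is specified by coefficients $a_0,\dots,a_{t-1}\in\mathbb{F}_{2^N}$, which takes $tN = t\cdot\max\{\secp,n\}$ random bits. On input $x\in\binset^\secp$, we pad $x$ with zeros to an element of $\mathbb{F}_{2^N}$. This padding is injective because $\secp\le N$. We then compute $p(x)=\sum_{j<t} a_j x^j$ and output the first $n$ bits of $p(x)$, which makes sense because $n\le N$.

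For $t$-wise independence, take distinct inputs $x_1,\dots,x_t$. Their images in $\mathbb{F}_{2^N}$ are still distinct, so the evaluation map $(a_0,\dots,a_{t-1})\mapsto (p(x_1),\dots,p(x_t))$ is given by a Vandermonde matrix. That matrix is invertible, so the map is a bijection $\mathbb{F}_{2^N}^t\to\mathbb{F}_{2^N}^t$. Hence uniformly random coefficients give uniform, independent values $p(x_1),\dots,p(x_t)$. Truncating each value to its first $n$ bits keeps the outputs uniform and independent, since truncation maps the uniform distribution on $\binset^N$ to the uniform distribution on $\binset^n$ coordinatewise.

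For efficiency, Horner evaluation uses $t$ multiplications and additions in $\mathbb{F}_{2^N}$. Each field operation costs $\poly(N)$ using the fixed irreducible polynomial, so the total time is $\poly(n,\secp,t)$.

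The only nontrivial step is finding the irreducible polynomial efficiently. I would cite the known deterministic algorithm for this. Alternatively, since the circuits are non-uniform, the polynomial can simply be hardwired into the circuit.
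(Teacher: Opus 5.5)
Your proposal is correct and is essentially the standard proof behind the cited corollary: a uniformly random polynomial of degree at most $t-1$ over $\mathbb{F}_{2^{\max\{\lambda, n\}}}$, with inputs zero-padded and outputs truncated, where $t$-wise independence follows from invertibility of the Vandermonde matrix and efficiency rests on having an irreducible polynomial (found deterministically, or hardwired). The paper gives no proof of its own and simply cites the source, which uses this same construction, so there is nothing further to compare.
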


\section{Constructing \texorpdfstring{$t$}{t}-PRS from \texorpdfstring{$1$}{1}-PRS}
\label{t_prs_construction}

\subsection{Parameter Setup}
Let $\designseedlen, \keylen, \outqubits, \garbqubits, n, t \in \bbN$.
Let $G$ be a pure state generator for an ensemble $\set{\ket{\varphi_i}}_i$ with key length $\keylen$, output register $A$ of $\outqubits$ qubits, and a garbage register $B$ of $\garbqubits$ qubits, such that for each key $i$, $G(i) = \ket{\varphi_i}_A \ket{g_i}_B$. Let $\epsExt \in (0, 1)$ be such that $\ell = \floor{\frac{\garbqubits}{2} - \log \left( \frac{1}{\epsExt} \right)} \geq 0$. Let $\cU = \set{U^r}_r$ be an efficiently implementable unitary $2$-design on the $\garbqubits$-qubit register $B$ with seed length $\designseedlen$. Denote by $B_2$ the register containing the last $q \coloneqq \garbqubits - \ell$ qubits of system $B$. 

Let $\cF_1 \subseteq \binset^n \to \binset$ be a $2t$-wise independent function family and 
$\cF_2 \subseteq \binset^n \to \binset^q$,
$\cF_3 \subseteq \binset^n \to \binset^q$,
$\cF_4 \subseteq \binset^n \to \binset^\keylen$,
$\cF_5 \subseteq\binset^n \to \binset^\designseedlen$ 
all $t$-wise independent function families.

\subsection{Construction}
Let $f_i \in \cF_i$, and define the following state.
\begin{flalign*}
    \ket{\psi_{f_1, f_2, f_3, f_4, f_5}}
    &= \frac{1}{\sqrt{2^n}}
    \sum_{i \in \binset^n}
    (-1)^{f_1(i)}
    \ket{f_5(i)}_R
    \left( X_{B_2}^{f_2(i)} Z_{B_2}^{f_3(i)} \right)
    U^{f_5(i)}_B
    \, G(f_4(i))_{AB} \,
    \ket{i}_C \\
    &= \frac{1}{\sqrt{2^n}}
    \sum_{i \in \binset^n}
    (-1)^{f_1(i)}
    \ket{f_5(i)}_R
    \left( X_{B_2}^{f_2(i)} Z_{B_2}^{f_3(i)} \right)
    U^{f_5(i)}_B
    \ket{\varphi_{f_4(i)}}_A \ket{g_{f_4(i)}}_B \ket{i}_C
\end{flalign*}

\subsection{Our Amplification Theorem}
\begin{theorem}[Main Theorem]
\label{main_thm}
    Let $\secp$ be a security parameter, and for each $\secp$ define the setup parameters w.r.t $\secp$ such that $G$ is a $1$-PRS generator, $\epsExt$ is negligible, $\ell \geq 0$, and $\designseedlen, \keylen, \outqubits, \garbqubits, n, t$ are polynomially bounded such that $n = \omega(\log(\secp))$. Define the pure state generator $\tParam{G}$ that outputs the construction above, namely for key $z = (f_1, f_2, f_3, f_4, f_5)$ outputs the pure state 
    $$\tParam{G}(z) = \ket{\psi_{f_1, f_2, f_3, f_4, f_5}} ~.$$
    
    Then $\tParam{G}$ is a $t$-PRS with key length 
    $$\tParam{\keylen} = t \left(
    2n + 2 \max \set{n, \ceil{\frac{\garbqubits}{2} + \log \left(\frac{1}{\epsExt}\right)}} + \max \set{n, \keylen} + \max \set{n, \designseedlen}
    \right) ~,$$
    and output length 
    $\tParam{\outqubits} = \designseedlen + \outqubits + \garbqubits + n$.
\end{theorem}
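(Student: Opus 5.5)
The key-length and output-length claims are bookkeeping, so I would settle them first. By Theorem~\ref{t_family_existence}, $\cF_1$ ($2t$-wise, $n\to1$) costs $2tn$ bits. $\cF_2,\cF_3$ (output $q=\garbqubits-\ell\le\ceil{\garbqubits/2+\log(1/\epsExt)}$) cost $t\max\set{n,q}$ each. $\cF_4,\cF_5$ cost $t\max\set{n,\keylen}$ and $t\max\set{n,\designseedlen}$. The output registers $R,A,B,C$ have $\designseedlen+\outqubits+\garbqubits+n$ qubits. Efficiency follows from the efficiency of $G$, of $\cU$, and of evaluating the function families.

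For security, I would write $\ket{\chi_{k,r,x,z}}=\ket{r}_R X^xZ^z U^r\ket{\varphi_k}\ket{g_k}$ and compute $\E_{f_1,\dots,f_5}\ketbra{\psi}{\psi}^{\otimes t}$ in three steps.

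\emph{Step 1 (phase decoupling).} Expanding gives pairs $(\vi,\vi')\in(\binset^n)^{2t}$ with phase $(-1)^{f_1(\vi)+f_1(\vi')}$. By $2t$-wise independence of $f_1$, the expectation vanishes unless every element appears an even number of times in $(\vi,\vi')$. Restricted to distinct $\vi$, this means $\vi'=\pi(\vi)$ for some $\pi$.

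\emph{Step 2 (collisions are negligible).} The non-distinct part of the diagonal has weight at most $t^2/2^n$, which is negligible because $n=\omega(\log\secp)$. Up to this error, the state is
\[
\frac{1}{2^{nt}}\sum_{\vi\in\dist,\pi\in S_t}\bigotimes_{j}\bigl(\ketbra{i_j}{i_j}\otimes\ketbra{\chi_{f(i_j)}}{\chi_{f(i_j)}}\bigr)P_\pi .
\]
After renormalising by $\abs{\dist}$, this is a sub-normalized version of the expression in Claim~\ref{simulator_output}.

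\emph{Step 3 (factorisation).} Each term involves $f_2,\dots,f_5$ only at the $t$ distinct points $i_1,\dots,i_t$. By Proposition~\ref{prop_indp_expect} (with Proposition~\ref{prop_norm_right_mult} to carry $P_\pi$ along), the expectation factorises. So the state is $\E[\tsim{\ket{\chi_1},\dots,\ket{\chi_t}}]$, where the labels $(k_j,r_j,x_j,z_j)$ are sampled independently and uniformly.

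The reduction to single-copy security then goes as follows. I apply Proposition~\ref{prop_otp} to each copy, which replaces $B_2$ by maximally mixed noise; this is exactly the statistical content of the uniform $x_j,z_j$. The remaining per-copy state on $RAB_1$ is $\E_{k,r}\ketbra{r}{r}\otimes\tr_{B_2}(U^r\ketbra{\varphi_kg_k}{\varphi_kg_k}U^{r\dagger})$. I want this to be computationally close to maximally mixed.

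This is a hybrid in two moves. First, the extractor (Theorem~\ref{p_extractor_thm} with $\ell\le\garbqubits/2-\log(1/\epsExt)$) makes $R B_1$ $\epsExt$-close to $I/\abs{R}\otimes I/\abs{B_1}\otimes\rho_{\text{rest}}$. Second, the $1$-PRS property of $G$ makes $A$ indistinguishable from maximally mixed. Since $\mathrm{Sim}$ is QPT and applied linearly, a $t$-step hybrid over copies applies. Each step uses Proposition~\ref{prop_distance_expect} and Proposition~\ref{prop_distance_pairs_tensor} for the statistical parts, and the $1$-PRS reduction for the computational part; in the latter, the reduction samples the other copies itself and feeds the result through the simulator. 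This reduces the problem to $\mathrm{Sim}$ applied to $t$ uniformly random basis states on $RAB$. Its output is exactly $\symrho\,\distprojreg{C}$ up to normalisation, and Claim~\ref{proj_to_dis_subspace_close_to_sym} makes this $t^2/2^n$-close to $\symrho$, the $t$-copy Haar state.

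The main obstacle is the extractor step. The state on $AB$ is pure for each $k$, so $H_{\infty}(B|A)$ can be negative. Naively treating $A$ as the environment $E$ therefore does not give $\kExt\ge0$, which is what the choice $\ell=\floor{\garbqubits/2-\log(1/\epsExt)}$ requires. I would first try to order the hybrids so that the $1$-PRS replacement is done jointly with $B$. Concretely, I would define the single-copy generator $(k,r)\mapsto$ the state on $RAB_1$, with $B_2$ traced out, and argue its security by applying the extractor to the pure state on $AB$ with $E$ being the purifying environment, i.e.\ the traced-out $B_2$ together with the key. This uses $H_\infty(B)\ge0$ for a pure state, so that only $A$'s marginal remains to be handled by the $1$-PRS assumption. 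If that fails, I would fall back on smoothing and the $\delta=\epsExt/12$ slack to obtain a usable min-entropy bound. Getting this conditioning right, so that $A$ is genuinely decoupled from $RB_1$, is where I expect the real work to be.
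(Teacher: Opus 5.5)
Your plan follows the paper's hybrid structure: the $2t$-wise independent sign leaves only $\vi'=\pi(\vi)$ on the distinct subspace; $t$-wise independence of $f_2,\dots,f_5$ factorises; QOTP plus a $2$-design extractor depolarises $RB$; a simulator-based hybrid reduces to $1$-PRS security; and $\symproj\distprojreg{C}$ is compared with $\symrho$. However, the extractor step, which you single out as the main obstacle, is left unproved, and the obstacle you describe is not real.

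By the definition of a pure-state generator, $G\ket{k}\ket{0}=\ket{\varphi_k}_A\ket{g_k}_B$ is a \emph{product} state. It must be, since the output on $A$ is pure. So for each fixed key there is no entanglement between $B$ and $AC$: $\rho_{ABC}=\ketbra{g_k}{g_k}_B\otimes\rho_{AC}\preceq I_B\otimes\rho_{AC}$, i.e.\ $H_\infty(B|AC)=0$. Negative conditional min-entropy arises only from entanglement. So the ``naive'' choice you reject is exactly the right one. For each fixed key $k$ and index $i_j$, apply Theorem~\ref{p_extractor_thm} with $\nExt=\garbqubits$, $\kExt=0$ and $E=AC$; no smoothing is needed, as $H^{\delta}_{\infty}\ge H_\infty$. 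The extractor guarantees that $\E_r\ketbra{r}{r}_R\otimes\tr_{B_2}\bigl(U^r\rho_{ABC}{U^r}^{\dagger}\bigr)$ is $\epsExt$-close to $\frac{I_R}{|R|}\otimes\frac{I_{B_1}}{|B_1|}\otimes\rho_{AC}$. That is near-uniformity of $RB_1$ \emph{together with} decoupling from $A$, precisely the property you said was still missing. Averaging over $k$ (Proposition~\ref{prop_distance_expect}) then leaves $\E_k\ketbra{\varphi_k}{\varphi_k}_A$ for the $1$-PRS step. This is how the paper proves Claim~\ref{applying_ext_otp}.

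Your proposed workaround would not have closed the gap as written. $B_2$ is part of the extractor's input register $B$, so it cannot serve as the environment $E$. The unconditional bound $H_\infty(B)\ge0$ only makes the marginal on $RB_1$ near-uniform; combined with pseudo-uniformity of the marginal on $A$, this says nothing about the joint state on $RAB_1$. Putting the classical key $k$ into $E$ would work, since $A$ is then obtained from $k$ by a fixed channel, but that too relies on the product structure you doubted.

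The rest of your plan is sound and differs from the paper only in ordering. You factor out $f_4$ before the extractor step, whereas the paper keeps $f_4$ fixed through Claim~\ref{applying_ext_otp} and uses its $t$-wise independence only in the $1$-PRS reduction. One small bookkeeping point: $q=\garbqubits-\ell$ equals $\ceil{\garbqubits/2+\log(1/\epsExt)}$ exactly, which gives the stated key length rather than an upper bound.
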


Beyond security, our construction preserves the ``stretch'' property of the $1$-PRS relative to $t$. That is, a $1$-PRS with stretch $s$ yields a $t$-PRS with stretch at least $ts$. In fact, by polynomially amplifying the stretch of the $1$-PRS, it is possible to construct a $t$-PRS with any stretch $ts$ for polynomially bounded $s$. 

\begin{corollary}
\label{cor_stretch_of_tprs}
    Let $\secp$ be a security parameter and $t$ a polynomial in $\secp$. If there exists a non-trivial $1$-PRS then there exists a non-trivial $t$-PRS with stretch $\geq t \s$ for any polynomial $\s \coloneqq \s(\secp)$.
\end{corollary}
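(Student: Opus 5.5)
The plan is to combine Claim~\ref{prop_1prs_amplification} (sequential repetition) with Theorem~\ref{main_thm}, choosing the parameters so that every $\max\{\cdot,\cdot\}$ in the key length $\tParam{\keylen}$ is attained by its non-$n$ argument. The overhead of the construction is then an additive $O(\secp)$ per copy, which the stretch of the repeated $1$-PRS absorbs.

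\textbf{Step 1: repetition.} Start from a non-trivial $(\keylen_0,\outqubits_0,1)$-PRS, so its stretch is $\outqubits_0-\keylen_0\geq 1$. Given the target polynomial $\s$, set $d=\s+3\secp+3$, a polynomial in $\secp$. By Claim~\ref{prop_1prs_amplification} we obtain a $(\keylen,\outqubits,1)$-PRS with $\keylen=d\keylen_0\geq d$ and $\outqubits-\keylen=d(\outqubits_0-\keylen_0)\geq d$. Let $G$ be its purified generator, with $\garbqubits$ garbage qubits.

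\textbf{Step 2: parameter choices for Theorem~\ref{main_thm}.}
\begin{itemize}
\item Padding $G$ with idle ancilla qubits (initialised to $\ket{0}$ and left untouched) keeps it a valid $1$-PRS generator for the same ensemble. We do this so that $\garbqubits\geq 2\secp+2$.
\item Take $n=\secp$, which is $\omega(\log\secp)$, and $\epsExt=2^{-\secp}$, which is negligible. Then $\ell=\floor{\garbqubits/2-\secp}\geq 0$, and $\ceil{\garbqubits/2+\secp}\geq n$.
\item Take $\cU$ to be an efficient unitary $2$-design on $\garbqubits$ qubits, for instance the Clifford group, whose seed length $\designseedlen$ is $\poly(\garbqubits)$. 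If $\designseedlen<n$, pad the seed with unused bits so that $\designseedlen\geq n$.
\item Since $\keylen\geq d>n$, we also have $\max\{n,\keylen\}=\keylen$.
\end{itemize}
All these quantities are polynomially bounded, so Theorem~\ref{main_thm} applies and $\tParam{G}$ is a $t$-PRS.

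\textbf{Step 3: the stretch computation.} With the choices above,
\[
\tParam{\keylen}=t\left(2n+2\ceil{\tfrac{\garbqubits}{2}+\secp}+\keylen+\designseedlen\right)\leq t\left(2\secp+\garbqubits+2\secp+2+\keylen+\designseedlen\right),
\]
while $t\tParam{\outqubits}=t(\designseedlen+\outqubits+\garbqubits+\secp)$. The $\designseedlen$ and $\garbqubits$ terms cancel, so
\[
t\tParam{\outqubits}-\tParam{\keylen}\geq t(\outqubits-\keylen-3\secp-2)\geq t(d-3\secp-2)\geq t\s .
\]
In particular the stretch is positive, so the resulting $t$-PRS is non-trivial.

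\textbf{Main obstacle.} The key step is ensuring that the $\max$ terms collapse as claimed, namely $\keylen\geq n$, $\designseedlen\geq n$ and $\garbqubits/2+\secp\geq n$. Each is secured by the cheap padding tricks above, all of which preserve efficiency and $1$-PRS security. The dependence on $\garbqubits$ and $\designseedlen$ then cancels exactly, as the informal discussion in the introduction anticipated.
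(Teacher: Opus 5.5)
Your proposal is correct and follows the paper's proof essentially step for step. You use sequential repetition via Claim~\ref{prop_1prs_amplification}, then Theorem~\ref{main_thm} with $n=\secp$ and $\epsExt=2^{-\secp}$, pad the ancilla and design-seed lengths so the $\max$ terms collapse, and obtain the same cancellation of $\garbqubits$ and $\designseedlen$, leaving stretch $t(\outqubits-\keylen-3\secp-O(1))$. Your differences are cosmetic: you take $d=\s+3\secp+3$ instead of $\ceil{(\s+3\secp)/s_G}$, which also conveniently guarantees $\keylen\geq n$, and you keep an explicit $+2$ slack for the ceiling instead of assuming $\garbqubits$ is even.
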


\begin{proof} 
    Let $G$ be the state generator of a non-trivial $1$-PRS with stretch $s_G$. Using Claim~\ref{prop_1prs_amplification} we can instantiate another non-trivial $1$-PRS with generator $\ampParam{G}$ and stretch $d \cdot s_G$ for some $d$ which we will determine later. We then use $\ampParam{G}$ as the base $1$-PRS generator for the construction described in Theorem~\ref{main_thm} together with function families $\cF_1, \dots, \cF_5$ as guaranteed by Theorem~\ref{t_family_existence}, resulting in a new $t$-PRS generator $\tParam{G}$.
    Set $n = \secp$, $\epsExt = 2^{-\secp}$. We must make sure that $\ell = \floor{ \frac{\ampParam{\garbqubits}}{2} - \log \left( \frac{1}{\epsExt} \right) } > 0$, which can be enforced by adding extra ancilla qubits set to $\ket{0}$ if necessary.
    
    Using these parameters in Equation~\ref{size_of_seed_spaces} gives\footnote{W.l.o.g. we assume that $\designseedlen \geq \secp$, as the seed length of the $2$-design can be trivially extended. Similarly, assume that $\ampParam{\garbqubits}$ is even.}
    $$\tParam{\keylen} = t \left(
    2 \secp + \ampParam{\garbqubits} + 2 \log \left(\frac{1}{\epsExt}\right) + \max \set{\ampParam{\keylen}, \secp} + \designseedlen
    \right) ~.$$
    The output of $t$ copies is $t \tParam{\outqubits} = t (\designseedlen + \ampParam{\outqubits} + \ampParam{\garbqubits} + \secp)$ qubits. 
    The stretch is
    \begin{flalign*}
        t \tParam{\outqubits} - \tParam{\keylen} 
        &= 
        t (\designseedlen + \ampParam{\outqubits} + \ampParam{\garbqubits} + \secp)
        - 
        \left[ 
        t \left(
        2 \secp + \ampParam{\garbqubits} + 2 \log \left(\frac{1}{\epsExt}\right) + \max \set{\ampParam{\keylen}, \secp} + \designseedlen
        \right)
        \right] \\
        &= t \left(
        \ampParam{\outqubits}  - \max \set{\ampParam{\keylen}, \secp} - \secp - 2 \log \left(\frac{1}{\epsExt}\right) 
        \right) \\
        &= t \left(
        \ampParam{\outqubits}  - \max \set{\ampParam{\keylen}, \secp} - 3 \secp
        \right) 
    \end{flalign*}
    To ensure this value is at least $t \s$ we want $\ampParam{\outqubits} - \ampParam{\keylen} - 3 \secp \geq \s$ $\implies d \cdot s_G - 3 \secp \geq \s$. So it suffices to take $d = \ceil{\frac{\s + 3 \secp}{s_G}} = \poly(\secp)$, resulting in\footnote{W.l.o.g. we can assume $\ampParam{\keylen} \geq \secp$, since we can always take $d \geq \secp$.}
    \begin{flalign*}
        t \tParam{\outqubits} - \tParam{\keylen} &= t \left(
        \ampParam{\outqubits}  - \max \set{\ampParam{\keylen}, \secp} - 3 \secp
        \right) \\
        &= t \left(
        \ampParam{\outqubits}  - \ampParam{\keylen} - 3 \secp
        \right) \\
        &\geq t \s
    \end{flalign*}
\end{proof}

\subsection{Proof of Theorem~\ref{main_thm}}
\paragraph{Key length.} The key of $\tParam{G}$ consists of all the seeds to the function families $\cF_1, \dots, \cF_5$. By Theorem~\ref{t_family_existence}, 
\begin{equation}
\label{size_of_seed_spaces}
\begin{gathered}
    \ell_{\cF_1} = 2t \cdot n, \\
    \ell_{\cF_2} = \ell_{\cF_3} = t \cdot \max \set{n, q} = t \max \set{n, \garbqubits - \floor{ \frac{\garbqubits}{2} - \log \left(\frac{1}{\epsExt} \right) }} = t \max \set{n, \ceil{ \frac{\garbqubits}{2} + \log \left(\frac{1}{\epsExt} \right) }},
    \\ 
    \ell_{\cF_4} = t \cdot \max \set{n, \keylen}, \quad
    \ell_{\cF_5} = t \cdot \max \set{n, \designseedlen}
\end{gathered}
\end{equation}
So the total length of the key is 
$$\tParam{\keylen} = t \left(
2n + 2 \max \set{n, \ceil{ \frac{\garbqubits}{2} + \log \left(\frac{1}{\epsExt} \right) }} + \max \set{n, \keylen} + \max \set{n, \designseedlen}
\right) ~.$$

\paragraph{Output length.} The output length is simply the total number of qubits in registers $R, A, B, C$, which is 
$$\tParam{\outqubits} = \kappa + \outqubits + \garbqubits + n ~.$$

\paragraph{The Generator is QPT.}
Theorem~\ref{t_family_existence} ensures that all the functions $f_1, \dots, f_5$ can be evaluated efficiently, and the $2$-design as well as $G$ are efficiently computable. Therefore $\tParam{G}$ also has an efficient implementation which includes preparing a uniform superposition over $i$, applying the gates of $G$, and using controlled-$X$, controlled-$Z$, controlled-$U$ (for the $2$-design) and controlled-phase gates.

\paragraph{Our Hybrids.}
We turn to proving the security of our construction using a sequence of hybrids. Our goal is to show that $t$ copies of a state generated by $\tParam{G}$ is computationally indistinguishable from $t$ copies of a Haar-random state over the same dimensions.

We introduce a sequence of hybrids $\hbd_i$, where each hybrid is simply a density matrix. The hybrid $\hbd_0$ will denote a $t$-wise application of the generator $\tParam{G}$ on a random seed. The final hybrid $\hbd_4$ will be density matrix of a $t$-copy Haar-random state. We will show that $\hbd_0 \cind \hbd_4$ by the following outline:

\begin{itemize}
    \item The hybrid $\hbd_0$ is just the $t$-wise density matrix of our construction. 

    \item The hybrid $\hbd_1$ is a restriction to the so-called distinct subspace.

    We show that $\hbd_0 \sind \hbd_1$ by showing that most of the mass of our density matrix is concentrated in the distinct subspace.

    \item The hybrid $\hbd_2$ is is obtained from the hybrid $\hbd_1$ by ``depolarizing'' the ancilla register ($B$) as well as the extractor-key register ($R$), that is, replacing the contents of registers $R, B$ with maximally mixed states.

    We show that $\hbd_1 \sind \hbd_2$ by showing that the application of the extractor followed by the quantum one-time pad in register $B$ of the hybrid $\hbd_1$ (with the key of the extractor stored in register $R$), mixes the states in these registers such that they look almost uniformly random.

    \item The hybrid $\hbd_3$ is obtained from the hybrid $\hbd_2$ by replacing each of the $1$-PRS states (in register $A$) with a maximally mixed state.

    Using the security of the $1$-PRS, we show that $\hbd_2 \cind \hbd_3$.

    \item The hybrid $\hbd_4$ is the density matrix of a $t$-copy Haar-random state.

    We show that $\hbd_3 \sind \hbd_4$ by showing that the
    hybrid $\hbd_3$ is a sub-normalized state proportional to the projection onto the intersection of the symmetric and distinct subspaces, and captures most of the mass of the  normalized projector onto the symmetric subspace, which coincides with hybrid $\hbd_4$.
\end{itemize}

Combining all these Hybrids gives
$$
\hbd_0
\sind
\hbd_1
\sind
\hbd_2
\cind
\hbd_3 
\sind 
\hbd_4
~,
$$
which completes the security proof of the $t$-PRS.

\paragraph{Formal Hybrid Definitions and Claims.}
We define $\hbd_0$ as 
\begin{equation}
\begin{aligned}[t]
    \hbd_0 &\coloneqq 
    \E_{f_1, \dots, f_5} \left[ 
    \left( \tParam{G}(f_1, \dots, f_5) \, \tParam{G}(f_1, \dots, f_5){^\dagger} \right) ^{\otimes t}
    \right] \\
    &= \E_{f_1, \dots, f_5} \left[ 
    \frac{1}{2^{nt}}
    \sum_{\bv{i}, \bv{i'} \in (\binset^n)^t}
    \bigotimes_{j=1}^t
    (-1)^{f_1(i_j) - f_1(i'_j)} \,
    V^{f_2, f_3, i_j}_{B_2} \,
    U^{f_5(i_j)}_B
    \rho^{f_4, f_5, i_j, i'_j}_{RABC}
    \left( U^{f_5(i'_j)}_B \right)^{\dagger}
    \left( V^{f_2, f_3, i'_j}_{B_2} \right)^{\dagger}
    \right] 
\end{aligned}
\end{equation}
where we denote $V^{f_2, f_3, i_j}_{B_2} = X^{f_2(i_j)}_{B_2} Z^{f_3(i_j)}_{B_2}$,
\begin{equation*}
    \rho^{f_4, f_5, i_j, i'_j}_{RAB}
    =
    \ketbra{f_5(i_j)}{f_5(i'_j)}_R \otimes \ketbra{\varphi_{f_4(i_j)}}{\varphi_{f_4(i'_j)}}_A \otimes  \ketbra{g_{f_4(i_j)}}{g_{f_4(i'_j)}}_B ~,
\end{equation*}
\begin{equation}
\label{tau_rabc}
    \rho^{f_4, f_5, i_j, i'_j}_{RABC}
    =
    \rho^{f_4, f_5, i_j, i'_j}_{RAB} \otimes  \ketbra{i_j}{i'_j}_C
    ~,
\end{equation}
and writing $\E_{f_i}$ throughout the proof is a shorthand for $\E_{f_i \gets \cF_i}$. We further denote for any pair of $t$-tuples $\bv{i}, \bv{i'}$,
\begin{equation}
\label{tau_rab}
    \tau^{\bv{i}, \bv{i'}}_{RAB} = 
    \E_{f_2, \dots, f_5} \left[
    \bigotimes_{j=1}^t
    V^{f_2, f_3, i_j}_{B_2} \
    U^{f_5(i_j)}_B \
    \rho^{f_4, f_5, i_j, i'_j}_{RAB} \
    \left( U^{f_5(i'_j)}_B \right)^{\dagger}
    \left( V^{f_2, f_3, i'_j}_{B_2} \right)^{\dagger}
    \right]
    ~.
\end{equation}

We define $\hbd_1$ as the projection of $\hbd_0$ onto the distinct subspace of register $C$. This results in the following sub-normalized state.
\begin{equation}
    \hbd_1 \coloneqq \distprojreg{C} \ \hbd_0 \ \distprojreg{C}
\end{equation}

Next, we show that the two hybrids are indistinguishable.
\begin{claim}
\label{tprs_close_to_proj}
    $\trnormdis{\hbd_0}{\hbd_1} = \trnormdis{\hbd_0}{\distprojreg{C} \ \hbd_0 \ \distprojreg{C}} \leq \frac12 \cdot \frac{t^2}{2^n} = \negl(\secp)$.
\end{claim}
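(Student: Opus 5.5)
The plan is to show that, after averaging over $f_1$, the cross terms between the distinct and non-distinct parts of register $C$ vanish. This makes $\hbd_0 - \hbd_1$ a positive semidefinite operator, so its trace norm is simply its trace, namely the non-distinct mass. This avoids the square-root loss of a gentle-measurement argument applied to each pure state $\ket{\psi}^{\otimes t}$ separately.

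Write $\Pi = \distprojreg{C}$ and $\bar\Pi = I - \Pi$. Then
\[
\hbd_0 = \Pi\hbd_0\Pi + \Pi\hbd_0\bar\Pi + \bar\Pi\hbd_0\Pi + \bar\Pi\hbd_0\bar\Pi .
\]
Using the expansion of $\hbd_0$ as a sum over pairs $\bv{i},\bv{i'}$ with register $C$ holding $\ketbra{\bv{i}}{\bv{i'}}$, the term $\Pi\hbd_0\bar\Pi$ collects exactly the pairs with $\bv{i}\in\dist$ and $\bv{i'}\notin\dist$. Since $f_1$ is sampled independently of $f_2,\dots,f_5$, the coefficient of each such term factors through $\E_{f_1}\big[(-1)^{\sum_j f_1(i_j)+\sum_j f_1(i'_j)}\big]$. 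This expression involves at most $2t$ inputs, so by $2t$-wise independence of $\cF_1$ it equals the value for a truly random function. That value is $1$ if every element appears an even number of times in the combined multiset $\{i_1,\dots,i_t,i'_1,\dots,i'_t\}$, and $0$ otherwise.

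The key combinatorial step is to show the parity condition fails for every such pair. Since $\bv{i}$ is distinct, each $i_j$ occurs once in $\bv{i}$. The parity condition would then force each of the $t$ distinct values $i_1,\dots,i_t$ to occur an odd number of times, hence at least once, among the $t$ entries of $\bv{i'}$. That forces each to occur exactly once, so $\bv{i'}$ is a permutation of $\bv{i}$ and therefore distinct, a contradiction. Hence $\Pi\hbd_0\bar\Pi = 0$, and by taking adjoints $\bar\Pi\hbd_0\Pi = 0$ as well. I expect this to be the main point needing care, together with checking that the $f_1$-expectation indeed factors out of the $\tau^{\bv{i},\bv{i'}}_{RAB}$ part.

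It follows that $\hbd_0 - \hbd_1 = \bar\Pi\hbd_0\bar\Pi \succeq 0$, and therefore
\[
\trnormdis{\hbd_0}{\hbd_1} = \tr\big[\bar\Pi\hbd_0\big].
\]
For each fixed key, $\ket{\psi}^{\otimes t}$ has its register $C$ components $\ket{\bv{i}}$ orthogonal, and each branch has squared norm $2^{-nt}$, since the operators applied on $R,A,B$ are unitary and the states are normalized. So this trace equals $1 - \abs{\dist}/2^{nt}$, the probability that a uniform $t$-tuple has a collision. A union bound over pairs gives at most $\binom{t}{2}2^{-n} \le \frac12\cdot\frac{t^2}{2^n}$. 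This is negligible because $t=\poly(\secp)$ and $n=\omega(\log\secp)$.
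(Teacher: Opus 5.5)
Your proposal is correct and follows essentially the same route as the paper. The $2t$-wise independence of $f_1$ kills the cross blocks between the distinct and colliding parts of register $C$, so $\hbd_0-\hbd_1=\bar\Pi\hbd_0\bar\Pi\succeq 0$, and its trace norm is the collision mass, at most $\binom{t}{2}2^{-n}$. The only differences are cosmetic: you eliminate the block $\Pi\hbd_0\bar\Pi$ by a parity/counting argument and take adjoints, while the paper handles the other block by noting that some entry of the distinct tuple is missing from the colliding one. You also compute the trace exactly for each fixed key rather than through the $f_1$-averaged expansion.
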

\begin{proof}
	We can write
	\begin{flalign*}
        \hbd_0 = 
    	\frac{1}{2^{nt}} \sum_{\bv{i}, \bv{i'}} \E_{f_1} \left[ (-1)^{\sum_{j=1}^t f_1(i_j) - f_1(i'_j)} \tau^{\bv{i}, \bv{i'}}_{RAB} \otimes \ketbra{\bv{i}}{\bv{i'}}_C \right]
        ~.
	\end{flalign*}
	
    Let $\collprojreg{C} = I - \distprojreg{C}$, we start by showing that $\collprojreg{C} \hbd_0\distprojreg{C} = 0$. This follows because
    \begin{align*}
    	\collprojreg{C} \hbd_0\distprojreg{C} &= \frac{1}{2^{nt}} \sum_{\bv{i}, \bv{i'}} \tau^{\bv{i}, \bv{i'}}_{RAB} \otimes \E_{f_1} \left[ (-1)^{\sum_{j=1}^t f_1(i_j) - f_1(i'_j)}\right] \collprojreg{C} \ketbra{\bv{i}}{\bv{i'}}_C \distprojreg{C}
        ~.
    \end{align*}
    Now let us focus on $\E_{f_1} \left[ (-1)^{\sum_{j=1}^t f_1(i_j) - f_1(i'_j)}\right] \collprojreg{C} \ketbra{\bv{i}}{\bv{i'}}_C \distprojreg{C}$. Whenever $\collprojreg{C} \ketbra{\bv{i}}{\bv{i'}}_C \distprojreg{C} \neq 0$ it means that $\bv{i}$ is not distinct and $\bv{i'}$ is distinct. In other words, $\bv{i'}$ has $t$ distinct elements, but $\bv{i}$ has a collision, which means that at least one element in $\bv{i'}$ does not appear in $\bv{i}$ at all. Therefore in this case, since $f_1$ is $2t$-wise independent and $\bv{i} \cup \bv{i'}$ contains at most $2t$ values, then $\E_{f_1} \left[ (-1)^{\sum_{j=1}^t f_1(i_j) - f_1(i'_j)}\right]=0$ and the claim follows. From this property we have

   \begin{flalign*}
	\trnormdis{\hbd_0} {\distprojreg{C} \ \hbd_0 \ \distprojreg{C}}
	&= \trnorm{\collprojreg{C} \ \hbd_0 \ \collprojreg{C}} \\
	&= \tr \left[ \collprojreg{C} \ \hbd_0  \right] \\
	&= \tr \left(
	\frac{1}{2^{nt}} \sum_{\bv{i},\bv{i'} \notin \dist} \E_{f_1}
	\left[ 
	(-1)^{\sum_{j=1}^t f_1(i_j) - f_1(i'_j)}
	\rho_{RAB}^{\bv{i}, \bv{i'}} \otimes \ketbra{\bv{i}}{\bv{i'}}_C
	\right] 
	\right) \\
	&= \frac{1}{2^{nt}} \sum_{\bv{i},\bv{i'} \notin \dist} \E_f
	\left( 
	(-1)^{\sum_{j=1}^t f_1(i_j) - f_1(i'_j)}
	\right) 
	\tr \left[
	\rho_{RAB}^{\bv{i}, \bv{i'}} 
	\right]
	\tr [\ketbra{\bv{i}}{\bv{i'}}_C] \\
	&\leq \frac{1}{2^{nt}} \sum_{\bv{i} \notin \dist}
	\tr \left[ 
	\rho_{RAB}^{\bv{i}, \bv{i}}
	\right] \\
	&= \frac{\abs{\set{\bv{i} \mid \bv{i} \notin \dist}} }{2^{nt}}
    \end{flalign*}
    which is simply the probability of obtaining a collision when sampling $t$ elements independently from a $2^n$ size universe, which is at most $\binom{t}{2} \cdot \frac{1}{2^n} \leq \frac12 \cdot \frac{t^2}{2^n}$.
\end{proof}

We now give an explicit form for $\hbd_1$.
\begin{claim}
\label{hbd_1_explicit} It holds that
\[
\hbd_1 = \frac{1}{2^{nt}} \sum_{\bv{i} \in \dist, \pi \in S_t} \left( \tau^{\bv{i}, \bv{i}}_{RAB} \otimes \ketbra{\bv{i}}{\bv{i}}_C \right) P_{\pi} ~.\]
\end{claim}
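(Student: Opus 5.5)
Starting from the expansion used in the proof of Claim~\ref{tprs_close_to_proj}, I will write
\[
\hbd_0 = \frac{1}{2^{nt}} \sum_{\bv{i}, \bv{i'}} \E_{f_1} \left[ (-1)^{\sum_{j=1}^t f_1(i_j) - f_1(i'_j)} \right] \tau^{\bv{i}, \bv{i'}}_{RAB} \otimes \ketbra{\bv{i}}{\bv{i'}}_C .
\]
Conjugating by $\distprojreg{C}$ on both sides keeps only the terms with $\bv{i}, \bv{i'} \in \dist$. For such pairs, the union $\bv{i} \cup \bv{i'}$ has at most $2t$ elements, so $2t$-wise independence of $f_1$ makes the $f_1$-expectation a product over the distinct values. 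Any value appearing in exactly one of the two tuples contributes a factor $\E[(-1)^{f_1(x)}]=0$. Since both tuples consist of distinct elements, each value appears at most once in each tuple. Hence the expectation equals $1$ exactly when $\bv{i'}$ is a permutation of $\bv{i}$, i.e.\ $i'_j = i_{\pi(j)}$ for a unique $\pi \in S_t$, and $0$ otherwise. This gives
\[
\hbd_1 = \frac{1}{2^{nt}} \sum_{\bv{i}\in\dist,\ \pi\in S_t} \tau^{\bv{i}, \pi(\bv{i})}_{RAB} \otimes \ketbra{\bv{i}}{\pi(\bv{i})}_C ,
\]
with $\pi(\bv{i}) = (i_{\pi(1)},\dots,i_{\pi(t)})$.

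The remaining step is to identify each such term with $\left(\tau^{\bv{i},\bv{i}}_{RAB}\otimes\ketbra{\bv{i}}{\bv{i}}_C\right)P_\pi$, where $P_\pi$ acts on all of $(RABC)^{\otimes t}$. By definition, the operator inside the expectation in $\tau^{\bv{i},\pi(\bv{i})}$ is the tensor product over $j$ of the rank-one operators
\[
\ket{w_{i_j}}\bra{w_{i_{\pi(j)}}},
\qquad
\ket{w_x} := V^{f_2,f_3,x}_{B_2}U^{f_5(x)}_B \ket{f_5(x)}_R\ket{\varphi_{f_4(x)}}_A\ket{g_{f_4(x)}}_B .
\]
Tensoring with $\ket{x}_C$, the full $j$-th factor is $\ket{u_{i_j}}\bra{u_{i_{\pi(j)}}}$, where $\ket{u_x} := \ket{w_x}\ket{x}_C$. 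For any vectors, $\bigotimes_j \ket{u_{i_j}}\bra{u_{i_{\pi(j)}}} = \left(\bigotimes_j \ketbra{u_{i_j}}{u_{i_j}}\right) P_\pi$, using the paper's convention $P_\pi = \sum \ketbra{x_1,\dots,x_t}{x_{\pi(1)},\dots,x_{\pi(t)}}$; this is checked by applying both sides to product vectors. The identity holds pointwise in $f_2,\dots,f_5$, and $P_\pi$ does not depend on them, so $P_\pi$ can be pulled out of the expectation. What remains under the expectation is exactly $\tau^{\bv{i},\bv{i}}_{RAB}\otimes\ketbra{\bv{i}}{\bv{i}}_C$, which gives the claim.

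The main obstacle is the convention bookkeeping. One must confirm that $\ketbra{\bv{i}}{\pi(\bv{i})}$ matches right multiplication by $P_\pi$ rather than $P_{\pi^{-1}}$, consistently with the definition of $P_\pi$. One must also check that the tuple $\pi(\bv{i})$ ranges over the permutations of $\bv{i}$ exactly once as $\pi$ ranges over $S_t$, which holds because $\bv{i}$ is distinct. This mirrors the computation in Claim~\ref{simulator_output}.
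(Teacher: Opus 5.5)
Your proposal is correct and follows the paper's proof. Both arguments restrict to $\bv{i},\bv{i'}\in\dist$, use $2t$-wise independence of $f_1$ to keep only pairs where $\bv{i'}$ is a permutation of $\bv{i}$, and rewrite $\tau^{\bv{i},\pi(\bv{i})}_{RAB}\otimes\ketbra{\bv{i}}{\pi(\bv{i})}_C$ as $\left(\tau^{\bv{i},\bv{i}}_{RAB}\otimes\ketbra{\bv{i}}{\bv{i}}_C\right)P_\pi$. The only difference is that you explicitly verify this last identity, pointwise in $f_2,\dots,f_5$ and with the $P_\pi$ versus $P_{\pi^{-1}}$ convention checked, whereas the paper simply asserts it.
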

\begin{proof}
    We use the same notation as in Equation~\ref{tau_rab} and write
    \begin{flalign*}
        \hbd_1 &= 
        \distprojreg{C} \left(
        \frac{1}{2^{nt}} \sum_{\bv{i}, \bv{i'} \in (\binset^n)^t} \E_{f_1} \left[ (-1)^{\sum_{j=1}^t f_1(i_j) - f_1(i'_j)} \tau^{\bv{i}, \bv{i'}}_{RAB} \otimes \ketbra{\bv{i}}{\bv{i'}}_C \right]
        \right) \distprojreg{C}\\
    	&= \frac{1}{2^{nt}} \sum_{\bv{i}, \bv{i'} \in \dist} \E_{f_1} \left[ (-1)^{\sum_{j=1}^t f_1(i_j) - f_1(i'_j)} \right] \tau^{\bv{i}, \bv{i'}}_{RAB} \otimes \ketbra{\bv{i}}{\bv{i'}}_C
    \end{flalign*}
    For any $\bv{i}, \bv{i'} \in \dist$, if $\bv{i}$ has an element that does not appear in $\bv{i'}$, or vice versa, then $\E_{f_1} \left[ (-1)^{\sum_{j=1}^t f_1(i_j) - f_1(i'_j)} \right] = 0$. The only terms that don't vanish are those where $\bv{i'}$ is a permutation of $\bv{i}$, and in this case $\E_{f_1} \left[ (-1)^{\sum_{j=1}^t f_1(i_j) - f_1(i'_j)} \right] = 1$. Therefore,
    \begin{flalign*}
        \hbd_1 &= 
    	\frac{1}{2^{nt}} \sum_{\bv{i} \in \dist, \pi \in S_t} \tau^{\bv{i}, \bv{\pi(i)}}_{RAB} \otimes \ketbra{\bv{i}}{\bv{\pi(i)}}_C \\
        &= \frac{1}{2^{nt}} \sum_{\bv{i} \in \dist, \pi \in S_t} \left( \tau^{\bv{i}, \bv{i}}_{RAB} \otimes \ketbra{\bv{i}}{\bv{i}}_C \right) P_{\pi} 
    \end{flalign*}
\end{proof}

The next step is showing that after applying the extractor and a quantum one-time pad in registers $B$ of $\hbd_1$, the resulting sub-normalized state is almost maximally mixed in registers $B$ and $R$.

We formally define 
\begin{equation}
\label{def_hbd_2}
    \hbd_2 \coloneqq 
    \frac{1}{2^{nt}} \sum_{\bv{i} \in \dist, \pi \in S_t} 
    \E_{f_4} \left[ 
    \bigotimes_{j=1}^t
    \frac{I_{R}}{|R|} \otimes \frac{I_{B}}{|B|} \otimes \ketbra{\varphi_{f_4(i_j)}}{\varphi_{f_4(i_j)}}_A \otimes \ketbra{i_j}{i_j}_C 
    \right] P_{\pi}
    ~,
\end{equation}

and show that
\begin{claim}
\label{applying_ext_otp}
    $\trnormdis{\hbd_1}{\hbd_2} \leq 2 t \epsExt = \negl(\secp)$.
\end{claim}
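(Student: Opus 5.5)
Using Claim~\ref{hbd_1_explicit}, we write both hybrids in the common form $\frac{1}{2^{nt}}\sum_{\bv{i}\in\dist,\pi\in S_t}(X^{\bv{i}}\otimes\ketbra{\bv{i}}{\bv{i}}_C)P_\pi$. For $\hbd_1$ we take $X^{\bv{i}}=\tau^{\bv{i},\bv{i}}_{RAB}$, and for $\hbd_2$ we take $X^{\bv{i}}=\E_{f_4}[\bigotimes_j \frac{I_R}{|R|}\otimes\frac{I_B}{|B|}\otimes\ketbra{\varphi_{f_4(i_j)}}{\varphi_{f_4(i_j)}}_A]$.

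The first step simplifies $\tau^{\bv{i},\bv{i}}_{RAB}$ for distinct $\bv{i}$. Since $i_1,\dots,i_t$ are distinct and $f_2,\dots,f_5$ are $t$-wise independent, Proposition~\ref{prop_indp_expect} applies to the joint $t$-wise independent tuple $(f_2,f_3,f_4,f_5)$, which is sampled independently per family. This factors $\tau^{\bv{i},\bv{i}}$ as $\bigotimes_{j=1}^t \sigma_{RAB}$, where
$$\sigma_{RAB}=\E_{x,z,y,r}\left[\ketbra{r}{r}_R\otimes\ketbra{\varphi_y}{\varphi_y}_A\otimes X^xZ^z U^r\ketbra{g_y}{g_y}_B {U^r}^\dagger Z^zX^x\right],$$
with $x,z,y,r$ uniform. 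The $\hbd_2$ term likewise factors as $\bigotimes_j \sigma'_{RAB}$ with $\sigma'=\E_y[\frac{I_R}{|R|}\otimes\ketbra{\varphi_y}{\varphi_y}\otimes\frac{I_B}{|B|}]$. Here I am assuming that each family's marginal on a single input is uniform, which is the standard property of $t$-wise independent families.

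Next we apply Proposition~\ref{prop_otp} to the register $B_2$. This gives $\sigma=\E_{y}\left[\ketbra{\varphi_y}{\varphi_y}_A\otimes \omega^y_{RB_1}\right]\otimes\frac{I_{B_2}}{|B_2|}$, where $\omega^y_{RB_1}=\E_r\left[\ketbra{r}{r}\otimes\tr_{B_2}(U^r\ketbra{g_y}{g_y}U^{r\dagger})\right]$. Now invoke Theorem~\ref{p_extractor_thm} with $\nExt=\garbqubits$ and trivial $E$ (or $E$ taken to be a purification of the pure state $\ket{g_y}$). The state $\ketbra{g_y}{g_y}_B$ is pure, so $H_\infty(B)=0$, and setting $\kExt=0$ gives $\ell\le\frac{\garbqubits}{2}-\log(1/\epsExt)$, which is exactly the parameter choice. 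Then $\td{\omega^y}{\frac{I_R}{|R|}\otimes\frac{I_{B_1}}{|B_1|}}\le\epsExt$ for every $y$. Proposition~\ref{prop_distance_expect} then gives $\trnormdis{\sigma}{\sigma'}\le 2\epsExt$, and Proposition~\ref{prop_distance_pairs_tensor} gives $\trnormdis{\bigotimes_j\sigma}{\bigotimes_j\sigma'}\le 2t\epsExt$.

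Finally we lift this to the whole sum, and I expect this to be the main obstacle, because the $P_\pi$ terms are not positive and summing the $t!\,|\dist|$ errors naively would blow up the bound. The fix is to reabsorb the permutations. The operator $\frac{1}{2^{nt}}\sum_{\bv{i},\pi}(\Delta\otimes\ketbra{\bv{i}}{\bv{i}})P_\pi$, with $\Delta=\sigma^{\otimes t}-\sigma'^{\otimes t}$ independent of $\bv{i}$, equals $t!\,\frac{|\dist|}{2^{nt}}\,(\Delta\otimes\rho_{\mathrm{dis}})\Pi_{\mathrm{Sym}}$. Here $\rho_{\mathrm{dis}}$ is the normalized distinct projector and $\Pi_{\mathrm{Sym}}$ is the symmetrizer of the full system $RABC$. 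Since $\Delta\otimes\rho_{\mathrm{dis}}$ commutes with no permutation in general, we instead use the simulator picture from Claim~\ref{simulator_output}. There, $\frac{1}{|\dist|}\sum_{\bv{i},\pi}(\bigotimes_j\rho_j\otimes\ketbra{i_j}{i_j})P_\pi$ is the output of a quantum channel (sample $\bv{i}$, symmetrize) applied to $\bigotimes_j\rho_j$. This extends by linearity and convexity to mixed product inputs and to their convex combinations. Since channels contract trace distance, and $|\dist|/2^{nt}\le1$, the distance is at most $\trnormdis{\sigma^{\otimes t}}{\sigma'^{\otimes t}}\le 2t\epsExt$. The delicate point is checking that the symmetrization map, defined on pure product inputs, extends to a CPTP map on the mixed states $\sigma^{\otimes t}$. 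I would verify this by purifying $\sigma$ and applying the simulator to the purified copies, then tracing out the purifying registers, which commute with the permutations acting on $RABC$.
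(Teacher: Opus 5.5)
Your proof is correct. Its core estimate is the same as the paper's: $t$-wise independence to factor the copies, Proposition~\ref{prop_otp} on $B_2$, Theorem~\ref{p_extractor_thm} with $\kExt=0$ for the pure garbage state, and Proposition~\ref{prop_distance_pairs_tensor} for the factor $t$. The two proofs differ in how this is lifted to the full sum over $(\bv{i},\pi)$.

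The paper keeps $f_4$ fixed and bounds $\td{\sigma_1^{\bv{i},f_4}}{\sigma_2^{\bv{i},f_4}}\le t\epsExt$ term by term. It then cites Propositions~\ref{prop_norm_right_mult} and~\ref{prop_distance_expect}. Read literally, that step only gives $\frac{t!\,\abs{\dist}}{2^{nt}}\cdot 2t\epsExt\approx t!\cdot 2t\epsExt$. The reason is that $\frac{1}{2^{nt}}\sum_{\bv{i},\pi}$ is not an average: its total weight is about $t!$. And $t!\,\epsExt$ need not be negligible (e.g.\ $t=\secp$, $\epsExt=2^{-\secp}$).

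You instead also average over $f_4$. This makes $\sigma^{\otimes t}-\sigma'^{\otimes t}$ independent of $\bv{i}$ and permutation invariant. You then write $\hbd_1$ and $\hbd_2$ as $\frac{\abs{\dist}}{2^{nt}}$ times the images of $\sigma^{\otimes t}$ and $\sigma'^{\otimes t}$ under the simulator channel, and conclude by contractivity. This is what actually delivers $2t\epsExt$ with no $t!$ loss. The paper's fixed-$f_4$ version can also be repaired, but only by an orbit-wise isometry argument that amounts to your channel argument.

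Three points in your write-up should be stated more precisely.

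\begin{itemize}
\item For fixed distinct $\bv{i}$, the simulator is the linear isometry $W_{\bv{i}}\colon\ket{v}\mapsto\frac{1}{\sqrt{t!}}\sum_{\pi}P_\pi(\ket{v}\otimes\ket{\bv{i}})$. Hence $\Phi(X)=\E_{\bv{i}}\,W_{\bv{i}}XW_{\bv{i}}^\dagger=\frac{t!}{\abs{\dist}}\Pi_{\mathrm{Sym}}(X\otimes\Pi_{\mathrm{dis}})\Pi_{\mathrm{Sym}}$ is manifestly CPTP, and the purification detour is unnecessary.
\item This channel agrees with $\frac{1}{\abs{\dist}}\sum_{\bv{i},\pi}(X\otimes\ketbra{\bv{i}}{\bv{i}})P_\pi$ only when $X$ commutes with all permutations. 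For a product of distinct states the latter expression is not even Hermitian. So the formula of Claim~\ref{simulator_output} should not be ``extended by linearity to mixed product inputs.'' What justifies your step is the permutation invariance of $\sigma^{\otimes t}$ and $\sigma'^{\otimes t}$.
\item Contrary to your remark, $\Delta\otimes\rho_{\mathrm{dis}}$ does commute with every permutation here. That fact alone would still lose $t!$ (via $\trnorm{\Pi_{\mathrm{Sym}}Y\Pi_{\mathrm{Sym}}}\le\trnorm{Y}$), so the isometric structure of $\Phi$ is what really makes the bound work.
\end{itemize}
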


\begin{proof}
    Combining Equation~\ref{tau_rab} and Claim~\ref{hbd_1_explicit} gives
    \begin{flalign*}
        \hbd_1 &= 
        \frac{1}{2^{nt}}  \sum_{\bv{i} \in \dist, \pi \in S_t} 
        \E_{f_4} 
        \E_{f_2, f_3, f_5} \left[
        \bigotimes_{j=1}^t
        V^{f_2, f_3, i_j}_{B_2} \
        U^{f_5(i_j)}_B \
        \rho^{f_4, f_5, i_j, i_j}_{RABC} \
        \left( U^{f_5(i_j)}_B \right)^{\dagger}
        \left( V^{f_2, f_3, i_j}_{B_2} \right)^{\dagger}
        \right] P_{\pi}
        ~.
    \end{flalign*}

    Fixing a $t$-tuple $\bv{i}$ and a function $f_4$, we denote the quantum (normalized) states
    $$\sigma_1^{\bv{i}, f_4}
    = \E_{f_2, f_3, f_5} \left[
    \bigotimes_{j=1}^t
    V^{f_2, f_3, i_j}_{B_2} \
    U^{f_5(i_j)}_B \
    \rho^{f_4, f_5, i_j, i_j}_{RABC} \
    \left( U^{f_5(i_j)}_B \right)^{\dagger}
    \left( V^{f_2, f_3, i_j}_{B_2} \right)^{\dagger}
    \right]
    ~,$$
    and 
    $$\sigma_2^{\bv{i}, f_4}
    = \bigotimes_{j=1}^t
    \frac{I_{R}}{|R|} \otimes \frac{I_{B}}{|B|} \otimes \ketbra{\varphi_{f_4(i_j)}}{\varphi_{f_4(i_j)}}_A \otimes \ketbra{i_j}{i_j}_C 
    ~.$$
     Observe that $\hbd_1 = \frac{1}{2^{nt}}  \sum_{\bv{i} \in \dist, \pi \in S_t} \E_{f_4} \left[ \sigma_1^{\bv{i}, f_4} \right] P_{\pi}$ and $\hbd_2 = \frac{1}{2^{nt}}  \sum_{\bv{i} \in \dist, \pi \in S_t} \E_{f_4} \left[ \sigma_2^{\bv{i}, f_4} \right] P_{\pi}$.
     
     We will show that $\td{\sigma_1^{\bv{i}, f_4}}{\sigma_2^{\bv{i}, f_4}} \leq t \epsExt$ and then conclude from Proposition~\ref{prop_norm_right_mult} and Proposition~\ref{prop_distance_expect} that $\trnormdis{\hbd_1}{\hbd_2} \leq 2 t \epsExt$.

     For a single index $i_j$ note that the functions $f_2, f_3, f_5$ act as random functions, and denote
     $$\rho_{ABC}^{f_4, i_j} = \ketbra{\varphi_{f_4(i_j)}}{\varphi_{f_4(i_j)}}_{A}
    \otimes \ketbra{g_{f_4(i_j)}}{g_{f_4(i_j)}}_{B}
    \otimes \ketbra{i_j}{i_j}_{C} ,$$
    $$\rho_{AC}^{f_4, i_j} = \ketbra{\varphi_{f_4(i_j)}}{\varphi_{f_4(i_j)}}_{A} \otimes \ketbra{i_j}{i_j}_{C} ~.$$
    Observe that each $\rho_{ABC}^{f_4, i_j}$ is a pure state with $H_{\infty}(B | AC) = 0$, so for any $\delta$, $H_{\infty}^{\delta}(B | AC) \geq 0$. Applying Theorem~\ref{p_extractor_thm} with $\nExt = \garbqubits$, $\kExt = 0$ and $\epsExt$, we can extract $\ell$ qubits from $B$ with the guarantee that
    \begin{equation*}
        \td{
        \E_{f_5} \left[
        \ketbra{f_5(i_j)}{f_5(i_j)}_{R} \otimes
        \tr_{B_2} \left( U^{f_5(i_j)}_B \rho_{ABC}^{f_4, i_j} \left(U^{f_5(i_j)}_B\right)^{\dagger} \right)
        \right]
        }{
        \frac{I_{R}}{|R|} \otimes \frac{I_{B_1}}{|B_1|} \otimes \rho_{AC}^{f_4, i_j}
        } \leq \epsExt
        ~.
    \end{equation*}
    Tensoring with the maximally mixed state on register $B_2$ gives
    \begin{equation*}
        \td{
        \underbrace{
        \E_{f_5} \left[
        \ketbra{f_5(i_j)}{f_5(i_j)}_{R} \otimes
        \tr_{B_2} \left( U^{f_5(i_j)}_B \rho_{ABC}^{f_4, i_j} \left(U^{f_5(i_j)}_B\right)^{\dagger} \right)
        \otimes \frac{I_{B_2}}{|B_2|}
        \right]
        }_{\coloneqq \sigma_1^{i_j, f_4}}
        }{
        \underbrace{
        \frac{I_{R}}{|R|} \otimes \frac{I_{B_1}}{|B_1|} \otimes \rho_{AC}^{f_4, i_j}
        \otimes \frac{I_{B_2}}{|B_2|}
        }_{\coloneqq \sigma_2^{i_j, f_4}}
        } \leq \epsExt
    \end{equation*}

    We start by analyzing $\sigma_1^{i_j, f_4}$.
    Observe that for each $r$, quantum OTP gives (Proposition~\ref{prop_otp})
    \begin{equation*}
    \begin{split}
        & \E_{f_2, f_3} \left[ \ketbra{r}{r}_R \otimes
        V^{f_2, f_3, i_j}_{B_2} \
        U^r_B \
        \rho_{ABC}^{f_4, i_j} 
        \left(U^r_B\right)^{\dagger}
        \left( V^{f_2, f_3, i_j}_{B_2} \right)^{\dagger} \right]
        \\
        &= \ketbra{r}{r}_R \otimes \tr_{B_2} \left( 
        U^r_B
        \rho_{ABC}^{f_4, i_j}
        \left(U^r_B\right)^{\dagger}
        \right)
        \otimes \frac{I_{B_2}}{|B_2|}
    \end{split}
    \end{equation*}
    So
    \begin{equation*}
    \begin{split}
         \sigma_1^{i_j, f_4}
        &= \E_{f_5} \left[
        \ketbra{f_5(i_j)}{f_5(i_j)}_R \otimes \tr_{B_2} \left( 
        U^{f_5(i_j)}_B
        \rho_{ABC}^{f_4, i_j}
        \left(U^{f_5(i_j)}_B\right)^{\dagger}
        \right)
        \otimes \frac{I_{B_2}}{|B_2|}
        \right] \\
        &= \E_{f_2, f_3, f_5} \left[ 
        \ketbra{f_5(i_j)}{f_5(i_j)}_R \otimes
        V^{f_2, f_3, i_j}_{B_2} \
        U^{f_5(i_j)}_B \
        \rho_{ABC}^{f_4, i_j} \,
        \left(U^{f_5(i_j)}_B\right)^{\dagger} \,
        {V^{f_2, f_3, i_j}_{B_2}}^{\dagger}
        \right]
    \end{split}
    \end{equation*}
    Now we analyze $\sigma_2^{i_j, f_4}$.
    \begin{equation*}
    \begin{split}
        \sigma_2^{i_j, f_4} 
        &= \frac{I_{R}}{|R|}
        \otimes \rho_{AC}^{f_4, i_j} \otimes \frac{I_{B_1}}{|B_1|}
        \otimes \frac{I_{B_2}}{|B_2|} \\
        &= \frac{I_{R}}{|R|}
        \otimes \frac{I_{B}}{|B|}
        \otimes \rho_{AC}^{f_4, i_j}
    \end{split}
    \end{equation*}
    
    Now consider all the elements of the $t$-tuple $\bv{i} = (i_1, \dots, i_t)$. From Proposition~\ref{prop_distance_pairs_tensor}, 
    $$\td{ \bigotimes_{j=1}^t \sigma_1^{i_j, f_4}
    }{
    \underbrace{
    \bigotimes_{j=1}^t \sigma_2^{i_j, f_4} 
    }_{\sigma_2^{\bv{i}, f_4}}
    } \leq t \epsExt
    ~.$$
    Since $\bv{i}$ contains $t$ distinct elements, and $f_2, f_3, f_5$ are $t$-wise independent functions, we can apply Proposition~\ref{prop_indp_expect} on $\bigotimes_{j=1}^t \sigma_1^{i_j}$, which will give us exactly $\sigma_1^{\bv{i}, f_4}$.
\end{proof}

We define $\hbd_3$ by replacing each of the $1$-PRS states in register $A$ of $\hbd_2$ with a maximally mixed state. Formally,
\begin{equation}
    \hbd_3 \coloneqq
    \frac{1}{2^{nt}} \sum_{\bv{i} \in \dist, \pi \in S_t} 
    \left(
    \bigotimes_{j=1}^t \frac{I_{R}}{|R|}
    \otimes \frac{I_{B}}{|B|}
    \otimes \frac{I_{A}}{|A|}
    \otimes \ketbra{i_j}{i_j}_{C}
    \right) P_{\pi}
    ~.
\end{equation}

We use the security of the $1$-PRS to show that
\begin{claim} For any distinguishing adversary $\cA$, 
    $\abs{
    \Pr \left[ \cA(\hbd_2) = 1 \right]
    -
    \Pr \left[ \cA(\hbd_3) = 1 \right]
    }
    \leq \negl(\secp)
    $.
\end{claim}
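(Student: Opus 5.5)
The plan is a reduction: from any distinguisher $\cA$ for $\hbd_2$ versus $\hbd_3$ I will build a distinguisher $\cB$ for $t$ copies of the product state $\ket{\varphi_{k_1}}\otimes\dots\otimes\ket{\varphi_{k_t}}$ with independent uniform keys, versus $(I_A/|A|)^{\otimes t}$. Such a distinguisher breaks $1$-PRS security by a standard $t$-step hybrid, as in Claim~\ref{prop_1prs_amplification}. There, the single-copy Haar state is maximally mixed, the $1$-PRS copies with independent keys can be generated by the reduction itself, and the maximally mixed copies can be prepared directly. Hence the advantage is at most $t\cdot\negl(\secp)=\negl(\secp)$.

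The first step is to rewrite $\hbd_2$. For $\bv{i}\in\dist$ and a $t$-wise independent $f_4$, Proposition~\ref{prop_indp_expect} gives
\[
\E_{f_4}\Bigl[\bigotimes_j \ketbra{\varphi_{f_4(i_j)}}{\varphi_{f_4(i_j)}}\Bigr]=\bigotimes_j \rho_\varphi,
\qquad \rho_\varphi=\E_{k}\ketbra{\varphi_k}{\varphi_k},
\]
and this no longer depends on $\bv{i}$. So $\hbd_2$ is obtained from $\hbd_3$ by replacing the inner $I_A/|A|$ with $\rho_\varphi$. Next I compare with Claim~\ref{simulator_output}. The simulator output on inputs $\ket{\phi_j}=\ket{\varphi_{k_j}}$ is
\[
\frac{1}{\abs{\dist}}\sum_{\bv{i},\pi}\Bigl(\bigotimes_j \ketbra{i_j}{i_j}\otimes\ketbra{\phi_j}{\phi_j}\Bigr)P_\pi .
\]
This is linear in the product input, so averaging over the keys replaces $\ketbra{\phi_j}{\phi_j}$ by $\rho_\varphi$. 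The same expression holds for mixed product inputs $\sigma^{\otimes t}$, since $\sigma$ is a mixture of pure states, or by running the simulator on random computational basis states. The register $A$ of the simulator will also carry the tensor factors $I_R/|R|\otimes I_B/|B|$. These are prepared by $\cB$ as maximally mixed states attached to each $\ket{\phi_j}$, and $P_\pi$ permutes them along with $A$, since it acts on the whole copy $RABC$. Therefore $\hbd_2=\frac{\abs{\dist}}{2^{nt}}\,\tsim{\rho_\varphi\text{-input}}$ and $\hbd_3=\frac{\abs{\dist}}{2^{nt}}\,\tsim{\text{maximally mixed input}}$, with the same normalization factor $c=\abs{\dist}/2^{nt}$.

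Finally, $\cB$ runs the symmetrization simulator (QPT) on its $t$ challenge states augmented with the maximally mixed $R,B$ registers, reorders the registers, and feeds the result to $\cA$. By the sub-normalized acceptance convention, $\Pr[\cA(\hbd_b)=1]=c\cdot\Pr[\cA(\text{normalized})=1]$. The difference in $\cA$'s acceptance is therefore $c$ times $\cB$'s advantage, which is at most $\cB$'s advantage. The main obstacle I expect is checking carefully the bookkeeping in the identification above: that $P_\pi$ in $\hbd_2$ permutes the full $RABC$ copies exactly as the simulator does, and that the simulator's sampling over $\dist$ matches the uniform sum over $\bv{i}\in\dist$. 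The rest is routine.
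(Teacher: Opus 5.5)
Your proposal is correct and follows essentially the same route as the paper. You use $t$-wise independence of $f_4$ on distinct inputs to turn the $A$-registers into independent $1$-PRS samples. You then identify both normalized hybrids, with the common factor $\abs{\dist}/2^{nt}$, as outputs of the symmetrization simulator on $t$ independent inputs with uniformly random $R,B$ registers, and reduce to $1$-PRS security via the $t$-step hybrid argument. The only cosmetic difference is that you feed the mixed state $\rho_\varphi$ into the $A$-register instead of sampling explicit keys $\bv{s}$, which is equivalent by linearity of the simulator.
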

\begin{proof}
    As defined in Equation~\ref{def_hbd_2},
    $$\hbd_2
    = \frac{1}{2^{nt}} \sum_{\bv{i} \in \dist, \pi \in S_t} \E_{f_4}
    \left[
    \bigotimes_{j=1}^t \frac{I_{R}}{|R|}
    \otimes \frac{I_{B}}{|B|}
    \otimes \ketbra{\varphi_{f_4(i_j)}}{\varphi_{f_4(i_j)}}_{A}
    \otimes \ketbra{i_j}{i_j}_{C}
    \right] P_{\pi}
    ~.$$
    Since $\bv{i} = (i_1, \dots, i_t)$ are distinct and $f_4$ is $t$-wise independent, this equals
    \begin{flalign*}
        = \frac{1}{2^{nt}} \sum_{\bv{i} \in \dist, \pi \in S_t} \E_{\bv{s} \gets (\binset^\keylen)^t}
        \left[
        \bigotimes_{j=1}^t \frac{I_{R}}{|R|}
        \otimes \frac{I_{B}}{|B|}
        \otimes \ketbra{\varphi_{s_j}}{\varphi_{s_j}}_{A}
        \otimes \ketbra{i_j}{i_j}_{C}
        \right] P_{\pi}
        ~.
    \end{flalign*}
    We would like to use the $1$-PRS security and ``replace'' each instance of $\ket{\varphi_{s_j}}$ with a Haar-random state. In other words, we want a reduction from a QPT distinguishing adversary that receives this state to a $1$-PRS QPT distinguisher. However, there are two issues we need to overcome. The first is that we are using states that are sub-normalized, and quantum reductions use normalized states. We overcome this by considering the normalized version of this state, as well as the normalized version of the state that is the result of replacing each $\ket{\varphi_{s_j}}$ by a random computational basis state, which is precisely $\hbd_3$.
    \begin{flalign*}
        &\quad
        \frac{1}{2^{nt}} \sum_{\bv{i} \in \dist, \pi \in S_t} \E_{\bv{y} \gets (\binset^\outqubits)^t}
        \left[
        \bigotimes_{j=1}^t \frac{I_{R}}{|R|}
        \otimes \frac{I_{B}}{|B|}
        \otimes \ketbra{y_j}{y_j}_{A}
        \otimes \ketbra{i_j}{i_j}_{C}
        \right] P_{\pi} \\
        &= \frac{1}{2^{nt}} \sum_{\bv{i} \in \dist, \pi \in S_t}
        \left(
        \bigotimes_{j=1}^t \frac{I_{R}}{|R|}
        \otimes \frac{I_{B}}{|B|}
        \otimes \frac{I_{A}}{|A|}
        \otimes \ketbra{i_j}{i_j}_{C}
        \right) P_{\pi} \\
        &= \hbd_3
    \end{flalign*}
    Notice that we used random computational basis states instead of Haar-random states, since for single-copy security they share the same density matrix. Furthermore, since $\tr[\hbd_2] = \tr[\hbd_3]$, then for any QPT distinguishing adversary $\cA$, its distinguishing advantage of the normalized states is at least that of the non-normalized states, namely
    \begin{flalign*}
        &\quad 
        \abs{\Pr[\cA(\hbd_2) = 1] - \Pr[\cA(\hbd_3) = 1]} \\
        &= \abs{\Pr \left[ \cA \left( \frac{\hbd_2}{\tr[\hbd_2]} \right) = 1 \right] \cdot \tr[\hbd_2]
        - \Pr \left[ \cA \left( \frac{\hbd_3}{\tr[\hbd_3]} \right) = 1 \right] \cdot \tr[\hbd_3]} \\
        &\leq \abs{\Pr \left[ \cA \left( \frac{\hbd_2}{\tr[\hbd_2]} \right) = 1 \right]
        - \Pr \left[ \cA \left( \frac{\hbd_3}{\tr[\hbd_3]} \right) = 1 \right]} \\
        &= \abs{\Pr \left[ \cA \left( \rho_2 \right) = 1 \right]
        - \Pr \left[ \cA \left( \rho_3 \right) = 1 \right]}
    \end{flalign*}
    where we define $\rho_2 \coloneqq \frac{\hbd_2}{\tr[\hbd_2]}$ and $\rho_3 \coloneqq \frac{\hbd_3}{\tr[\hbd_3]}$.
    
    The second issue is how to efficiently simulate these normalized states given independent instances from a $1$-PRS distinguisher. For this we will use the symmetrization simulator defined in Subsection~\ref{symmetrization_simulator}. For generating $\rho_2$ sample independently $\bv{r} \gets (\binset^\designseedlen)^t, \bv{b} \gets (\binset^\garbqubits)^t, \bv{s} \gets (\binset^\keylen)^t$ and
    run the simulator on the state $\bigotimes_{j=1}^t \ket{r_j}_R \ket{b_j}_B \ket{\varphi_{s_j}}_A$. By Claim~\ref{simulator_output} the output state will be 
    \begin{flalign*}
        &\quad 
        \E_{\bv{r}, \bv{b}, \bv{s}} \left[ 
        \tsim{ 
        \bigotimes_{j=1}^t \ket{r_j}_R \ket{b_j}_B \ket{\varphi_{s_j}}_A} 
        \right] \\
        &= \E_{\bv{r}, \bv{b}, \bv{s}} \left[
        \frac{1}{\abs{\dist}} \sum_{\bv{i} \in \dist, \pi \in S_t}
        \left( \bigotimes_{j=1}^t \ketbra{r_j}{r_j}_R \otimes \ketbra{b_j}{b_j}_B \otimes \ketbra{\varphi_{s_j}}{\varphi_{s_j}}_A \otimes \ketbra{i_j}{i_j}_C \right) P_{\pi} 
        \right] \\
        &=  \frac{1}{\abs{\dist}} \sum_{\bv{i} \in \dist, \pi \in S_t}
        \E_{\bv{s}} \left[
        \bigotimes_{j=1}^t \frac{I_R}{\abs{R}}_R \otimes \frac{I_B}{\abs{B}}_B \otimes \ketbra{\varphi_{s_j}}{\varphi_{s_j}}_A \otimes \ketbra{i_j}{i_j}_C \right] P_{\pi} \\
        &= \frac{2^{nt}}{\abs{\dist}} \hbd_2 = \rho_2
    \end{flalign*}
    For generating $\rho_3$, sample independently $\bv{r} \gets (\binset^\designseedlen)^t, \bv{b} \gets (\binset^\garbqubits)^t, \bv{y} \gets (\binset^\outqubits)^t$ and run the simulator on the state $\bigotimes_{j=1}^t \ket{r_j}_R \ket{b_j}_B \ket{y_j}_A$. A similar calculation shows that in this case the output state will be $\rho_3$. 
    
    Hence now we can reduce to the security of $1$-PRS. If a QPT distinguishing adversary $\cA$ can distinguish $\hbd_2$ from $\hbd_3$ with non-negligible advantage then it can distinguish $\rho_2$ from $\rho_3$ with at least the same advantage. Since $\rho_2$ can be simulated given $t$ independent copies of single-copy pseudorandom states and $\rho_3$ can be simulated given $t$ independent states from the computational basis, there exists a QPT distinguishing adversary that distinguishes $t$ independent copies of the $1$-PRS from $t$ independent copies of random states with non-negligible advantage, simply by running $\cA$ on the output of the simulator. Applying a standard hybrid argument as in Claim~\ref{prop_1prs_amplification} to this adversary, we can replace the $t$ coordinates one by one. It follows that the distinguishing advantage in at least one coordinate must be at least a $1/t$ fraction of the total advantage (and thus still non-negligible), yielding a $1$-PRS distinguisher that contradicts $1$-PRS security. Therefore it must be that $\abs{\Pr[\cA(\hbd_2) = 1] - \Pr[\cA(\hbd_3) = 1]} \leq \negl(\secp)$.
\end{proof}

Finally, recall that the density matrix of a $t$-copy Haar random state equals the normalized projector onto the symmetric subspace. Hence we formally define our last hybrid as
\begin{equation}
    \hbd_4 \coloneqq \E_{\ket{\psi} \gets \mu_{\tParam{\outqubits}}} \left[ \ketbra{\psi}{\psi}^{\otimes t} \right] = \symrho
    ~.
\end{equation}

We show that
\begin{claim}
\label{close_to_haar}
    $\trnormdis{\hbd_3}{\hbd_4} \leq 3 \frac{t^2}{2^n} = \negl(\secp)$.
\end{claim}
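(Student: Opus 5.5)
Let $p \coloneqq \tParam{\outqubits} = \designseedlen + \outqubits + \garbqubits + n$ be the number of qubits per copy. Regroup each copy's registers as $RAB$ (playing the role of the ``$R$'' register of $r = p-n$ qubits in Claim~\ref{proj_to_dis_subspace_close_to_sym}) and $C$ ($n$ qubits). The plan is to show that $\hbd_3$ is exactly a scalar multiple of $\symrho \, \distprojreg{C}$. Then Claim~\ref{proj_to_dis_subspace_close_to_sym} and a normalization comparison finish the argument.

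\textbf{Step 1 (identify $\hbd_3$).} Write $I_{RAB} = \sum_{\bv{x}} \ketbra{\bv{x}}{\bv{x}}$ over computational basis strings in $(\binset^{p-n})^t$. This gives
\[
\hbd_3 = \frac{1}{2^{nt} 2^{(p-n)t}} \sum_{\bv{i} \in \dist, \bv{x}, \pi} \Big(\bigotimes_j \ketbra{x_j}{x_j} \otimes \ketbra{i_j}{i_j}\Big) P_\pi = \frac{1}{2^{pt}} \Big(\sum_{\pi} P_\pi\Big) \distprojreg{C} .
\]
Here $P_\pi$ permutes whole copies, and it commutes with $\distprojreg{C}$. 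Hence $\hbd_3 = \frac{t!}{2^{pt}} \symproj \distprojreg{C} = c\, \symrho\, \distprojreg{C}$ with $c = t!\binom{2^p+t-1}{t}/2^{pt}$. Equivalently, $c = \prod_{j=0}^{t-1} (1 + j/2^p) \geq 1$.

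\textbf{Step 2 (triangle inequality).} We bound
\[
\trnormdis{\hbd_3}{\hbd_4} \leq \trnormdis{c\,\symrho\distprojreg{C}}{\symrho\distprojreg{C}} + \trnormdis{\symrho\distprojreg{C}}{\symrho}.
\]
The second term is at most $t^2/2^n$ by Claim~\ref{proj_to_dis_subspace_close_to_sym}, since $2t \le 2^n$ for large $\secp$ because $n=\omega(\log\secp)$. The first term equals $(c-1)\tr[\symrho\distprojreg{C}] \leq c-1$, because $\symrho\distprojreg{C}$ is PSD with trace at most $1$. By Proposition~\ref{mult_1_plus_eps_i} with $\epsilon_j = j/2^p$, so that $\epsilon = t(t-1)/2^{p+1} < 1$, we get $c - 1 \leq \epsilon + \epsilon^2 \leq 2\epsilon \leq t^2/2^p \leq t^2/2^n$. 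The total is at most $2t^2/2^n \le 3t^2/2^n$, which is negligible since $t$ is polynomial and $n=\omega(\log \secp)$.

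\textbf{Main obstacle.} The only delicate point is Step 1: checking that the operator-valued expression $\sum_{\bv{i},\pi}(\cdots)P_\pi$ really equals $(\sum_\pi P_\pi)$ restricted to the distinct subspace of $C$ with the identity on $RAB$. This requires confirming that $P_\pi$ acts on entire copies, including $RAB$ jointly with $C$, so that the full-copy symmetric projector appears rather than one on $C$ alone. Once this is checked, everything else is bookkeeping of the constant $c$.
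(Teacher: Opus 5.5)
Your proposal is correct and follows the paper's proof essentially step for step. You identify $\hbd_3 = \frac{t!}{2^{\tParam{\outqubits} t}}\symproj\distprojreg{C}$, apply the triangle inequality through $\symrho\distprojreg{C}$, bound one term by Claim~\ref{proj_to_dis_subspace_close_to_sym}, and bound the normalization constant by Proposition~\ref{mult_1_plus_eps_i}. The only cosmetic differences are that you bound the first term directly as $(c-1)\tr[\symrho\distprojreg{C}] \le c-1$, rather than through $\trnormdis{c\,\symrho}{\symrho}$ and the projector property, and that you keep the sharper $c-1 \le t^2/2^{\tParam{\outqubits}}$, which gives $2t^2/2^n$ instead of the paper's $3t^2/2^n$.
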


\begin{proof}
    We develop the expression of $\hbd_3$.
    \begin{flalign*} 
        \hbd_3 
        &= \frac{1}{2^{nt}} \sum_{\bv{i} \in \dist, \pi \in S_t}
        \left(
        \bigotimes_{j=1}^t \frac{I_{R}}{|R|}
        \otimes \frac{I_{B}}{|B|}
        \otimes \frac{I_{A}}{|A|}
        \otimes \ketbra{i_j}{i_j}_{C}
        \right) P_{\pi} \\
        &= \frac{1}{2^{nt}} \sum_{\bv{i} \in \dist, \pi \in S_t} \E_{\bv{x} \gets (\binset^{\designseedlen + \garbqubits + \outqubits})^t}
        \left[
        \bigotimes_{j=1}^t \ketbra{x_j}{x_j}_{RBA}
        \otimes \ketbra{i_j}{i_j}_{C}
        \right] P_{\pi} \\
        &= \frac{1}{2^{nt} \cdot 2^{(\designseedlen + \garbqubits + \outqubits) t}}
        \sum_{\bv{i} \in \dist, \bv{x} \in (\binset^{\designseedlen + \garbqubits + \outqubits})^t, \pi \in S_t}
        \left[
        \left(
        \bigotimes_{j=1}^t \ketbra{x_j}{x_j}_{RBA}
        \otimes \ketbra{i_j}{i_j}_{C}
        \right) P_{\pi}
        \right] \\
        &=  \frac{t!}{2^{\tParam{\outqubits} t}} \symproj \distprojreg{C}
    \end{flalign*}
    where $\symproj$ is the projector onto the symmetric subspace of the whole system, and $\distprojreg{C}$ is the projector onto the distinct subspace of system $C$.
    
    Denote $d = 2^{\tParam{\outqubits}}$, and recall the normalized projector onto the symmetric subspace $\symrho = \frac{\symproj}{\binom{d + t - 1}{t}}$. Notice that 
    \begin{flalign*}
        \trnormdis{\frac{t!}{d^t} \symproj}{\symrho}
        &= \trnorm{\symrho \left( \frac{t!}{d^t} \binom{d + t - 1}{t} - 1 \right)} \\
        &= \frac{t!}{d^t} \binom{d + t - 1}{t} - 1 \\
        &= \frac{(d + t - 1)!}{d^t (d - 1)!} - 1 \\
        &= \frac{(d + t - 1)(d + t - 2) \dots d}{d^t} - 1 \\
        &\coloneqq \delta_0
    \end{flalign*}
    and since $\distprojreg{C}$ is a projector, 
    $$\trnormdis{\frac{t!}{d^t} \symproj \distprojreg{C}}{\symrho \distprojreg{C}}
    \leq \trnormdis{\frac{t!}{d^t} \symproj}{\symrho}
    = \delta_0 ~.$$
    Claim~\ref{proj_to_dis_subspace_close_to_sym} gives $\trnormdis{\symrho}{\symrho \distprojreg{C}} \leq \frac{t^2}{2^n}$ as long as $\frac{2t}{2^n} \leq 1$, which is true for large enough $\secp$. Hence by triangle inequality we get
    \begin{flalign*}
        \trnormdis{\hbd_3}{\hbd_4}
        = \trnormdis{
        \frac{t!}{d^t} \symproj \distprojreg{C}
        }{
        \symrho
        }
        \leq \delta_0 + \frac{t^2}{2^n}
        ~.
    \end{flalign*}
    We conclude by showing that $\delta_0 \leq 2 \frac{t^2}{2^n}$.
    \begin{flalign*}
    \delta_0 
    &= \frac{(d + t - 1)(d + t - 2) \dots d}{d^t} - 1 \\
    &= \left[ \prod_{j=0}^{t-1} \left( 1 + \underbrace{\frac{j}{d}}_{\epsilon_j} \right) \right] - 1 
    \end{flalign*}
    By Proposition~\ref{mult_1_plus_eps_i}, if $\epsilon \coloneqq \sum_{j=0}^{t-1} \epsilon_j < 1$ then $\delta_0 \leq \epsilon + \epsilon^2 \leq 2 \epsilon$. Asymptotically, $\frac{t^2}{d} = \negl(\secp)$, so for large enough $\secp$ we have $\sum_{j=0}^{t-1} \epsilon_j = \sum_{j=0}^{t-1} \frac{j}{d} \leq \frac{t^2}{d} < 1$. Hence $\delta_0 \leq 2\frac{t^2}{d} \leq 2\frac{t^2}{2^n} = \negl(\secp)$.
\end{proof}

\paragraph*{Acknowledgments.}
The authors are supported by the Horizon Europe Research and Innovation Program via ERC Project ACQUA (Grant 101087742).

\printbibliography

\end{document}